\def\section{\@startsection {section}{1}{\z@}{-2.5ex plus -1ex minus
 -.2ex}{1.3ex plus .2ex}{\large\bf}}
\def\subsection{\@startsection{subsection}{2}{\z@}{-2.25ex plus%
 -1ex minus -.2ex}{0.5ex plus .2ex}{\bf}}
\newcommand{\bee}{\begin{equation}}
\newcommand{\eee}{\end{equation}}
\newcommand{\R}{\mathbb{R}}
\newcommand{\N}{\mathbb{N}}
\newcommand{\C}{\mathbb{C}}
\newcommand{\Z}{\mathbb{Z}}
\newcommand{\A}{\mathcal{A}}
\def\Dslash{\slashed{D}}
\def\St{\mathrm{St}}
\newtheorem{thm}{Theorem}[section]
\newtheorem{cor}[thm]{Corollary} 
\newtheorem{dfn}[thm]{Definition}
\newtheorem{lem}[thm]{Lemma}
\theoremstyle{definition}
\newcommand{\rs}{\lambda}
\tikzstyle{vertex} = [circle, draw, fill=blue!20, scale=1,auto=left]
\tikzstyle{vert} = [circle, draw, fill=blue!20, scale=.8,auto=left]
\tikzstyle{line} = [draw]
\begin{document}

\begin{flushright}
EMPG-17-06
\end{flushright}
\vskip 10pt
\baselineskip 28pt


\begin{center}
{\Large \bf Magnetic Zero-Modes, Vortices and Cartan Connections}

\baselineskip 18pt

\vspace{0.4 cm}

{\bf Calum Ross and Bernd J.~Schroers}\\
\vspace{0.2 cm}
Maxwell Institute for Mathematical Sciences and
Department of Mathematics,
\\Heriot-Watt University,
Edinburgh EH14 4AS, UK. \\
{\tt cdr1@hw.ac.uk} and {\tt b.j.schroers@hw.ac.uk} \\

\vspace{0.4cm}

{ November  2017} 
\end{center}

\baselineskip 16pt
\parskip 4 pt
\parindent 10pt
\begin{abstract}
\noindent 
We exhibit a close relation between vortex  configurations on the 2-sphere and   magnetic zero-modes of the Dirac operator on $\R^3$  which obey an additional non-linear equation. We show that both    are best understood in terms of  the geometry induced on the 3-sphere  via pull-back  of the round geometry with  bundle maps  of the  Hopf fibration.  We use this viewpoint to deduce a  manifestly smooth formula for square-integrable  magnetic zero-modes in terms of two homogeneous polynomials in two complex variables. 
\end{abstract}

%

\section{Introduction}
The goal of this paper is to explain and exploit   a  link between  magnetic zero-modes of the Dirac operator on Euclidean 3-space and  vortices on the 2-sphere via a particular family of geometries on the 3-sphere. 

The 3-sphere  geometries are obtained from the standard round geometry via pull-back with  a family of maps $S^3\rightarrow S^3$  which are bundle maps  of the Hopf fibration  and cover holomorphic maps $S^2\rightarrow S^2$.  The bundle  maps are given in terms of  two complex polynomials, and one  consequence of our analysis is a  manifestly smooth and square-integrable expression both for  the magnetic zero-modes and the vortex configurations in terms of these polynomials.  Another is an interpretation of vortices on $S^2$ in terms of Cartan geometry.   In the  remainder of this introduction we sketch the context for our results.

The problem of determining magnetic zero-modes of the Dirac operator in Euclidean 3-space  was first posed and addressed in an influential  paper by Loss and Yau \cite{LY} in 1986.  Motivated by  questions about the stability of atoms, the authors were interested in finding spinors $\Psi$ and  magnetic gauge potentials $A$  on $\R^3$ such that $\Psi$ is a zero-mode of the  (static) Dirac operator minimally coupled to $A$, and  both the associated   magnetic field and the spinor are  square-integrable. In this paper,  we call pairs of spinors $\Psi$ and magnetic gauge potentials $A$  satisfying this condition magnetic zero-modes. 

Loss and Yau gave explicit expressions for one family  of magnetic zero-modes, which we call linear  in the following, and derived a formula which determines a gauge field for a given spinor field such that the pair form a magnetic zero-mode. This formula is singular where the spinor field vanishes, but it was nonetheless  used by Adam, Muratori  and Nash (AMN) in a series of papers \cite{AMN1,AMN2,AMN3}  to obtain magnetic zero-modes   which   satisfy an additional non-linear equation, and which we call vortex  zero-modes in this  paper. AMN  observed  that their solutions can be expressed in terms of solutions of the Liouville equation on $S^2$, and  addressed  the  singularities in the resulting formulae.  In \cite{AMN1},  they also pointed out  that the coupled Dirac and non-linear equation can be obtained as the dimensional reduction of a perturbed   Seiberg-Witten equation on $\R^4$ with a crucial sign flip (the resulting equation is often called the Freund equation).

In 2000, Erd\"os and Solovej pointed out that  the geometry underlying the existence of   magnetic zero-modes  is  the conformal equivalence of $\R^3$ and $S^3\setminus \{ \text{point}\}$ and  the Hopf fibration of $S^3$ over $S^2$ \cite{ES}.   This was used  in \cite{DM,Min}  to show that the linear magnetic zero-modes found by Loss and Yau can be obtained  directly by pulling  eigenmodes (of any energy) of the Dirac operator on $S^3$ back to $\R^3$.  

One motivation for this paper was to find a similarly geometrical but also explicit understanding of the vortex zero-modes, i.e., to use the geometrical insight of Erd\"os and Solovej  for a better understanding and improvement of the formulae derived by AMN. A second motivation  was to explore links to  a vortex equation  for a scalar Higgs field and an abelian gauge field on $S^2$,  recently  proposed  by  Popov. The existence of such links is suggested by the appearance  of the same data  in Popov vortex solutions and  the AMN expressions for magnetic zero-modes; it is  the reason why we call the latter  vortex   zero-modes.

The Popov vortex equations  were   obtained  in \cite{Popov}  as the reduction by $SU(1,1)$ symmetry  of the self-duality equations for  $SU(1,1)$ Yang-Mills theory on the product of the 2-sphere with  hyperbolic 2-space.  In  \cite{Manton1},   Manton pointed out that the Popov vortex equations can be solved in terms of rational maps $S^2\rightarrow S^2$.  His solution turns out to be a particularly simple illustration  of an interesting  subsequent observation by Baptista \cite{Baptista} that Bogmol'nyi vortex equations on a  K\"ahler surface can be interpreted as  degenerate Hermitian metrics. 

In the terminology of Baptista's paper, Manton showed that Popov vortices encode the geometry of the pull-back of the round metric on $S^2$ with a rational map. If the rational map has degree  $n$, the metric  necessarily has  conical singularities  at the $2n-2$ ramification points, which are also the zeros of the vortex Higgs field. 

Here, we  lift this picture  from $S^2$ to $S^3$.  This is geometrically natural for  Popov vortices, since they live on a $U(1)$ bundle over $S^2$ whose total space is the Lens space $S^3/\Z_{2n-2}$. Manton's rational map characterising the vortex lifts to a bundle  map $S^3\rightarrow S^3$, and the pull-back of the round metric on $S^3$ with this bundle map defines a metric which,  generalising Baptista's viewpoint, encodes a vortex configuration on $S^3$. We then show that such a vortex configuration defines a  magnetic zero-mode  of the Dirac operator on $S^3$. Using conformal equivalence we  obtain the advertised smooth and manifestly square-integrable  expression for vortex zero-modes on $\R^3$ and, at the same time,  establish the expected link to Popov vortices.

The fact that $S^3$ is the Lie group $SU(2)$ allows one to encode the round geometry of $S^3$ in the Maurer-Cartan form $h^{-1}dh$.  In Cartan geometry, the same form also encodes the geometry of the round geometry of $S^2$ by combining the orthonormal frame field with the spin connection  1-form of $S^2$. 
Since all the geometries we discuss in this paper are pulled back from the round geometries  of $S^2$ and $S^3$, it is  not surprising that many of our results can be stated succinctly in terms of the pull-back of the Maurer-Cartan form via the bundle map $S^3\rightarrow S^3$. In fact, the flatness condition of the $su(2)$ gauge potentials defined by these pull-backs turns out to be equivalent to  our vortex equations on $S^3$ and to  the  Popov vortex equations on $S^2$. This adds a further, non-abelian interpretation of  the  vortex  zero-modes.  It also provides an intriguing link with the self-duality equations for $SU(1,1)$ gauge fields from which the Popov equations arose.

In this paper we are interested in the geometry linking magnetic zero-modes and vortices, but also in manifestly smooth expressions for both. The paper contains a number of explicit calculations and formulae, and we therefore need to lay out  conventions and coordinates  in some detail at the beginning.  To help the reader keep sight  of the bigger picture, we have  also produced a summary  diagram of the geometries and the maps between them in Fig.~\ref{summary}. Although  the picture is part of our  final summary section, the reader may find it helpful to refer to it now or while reading the paper.

The paper is organised as follows. 
In Sect.~2 we collect our conventions for parametrising $SU(2)$ both as a Lie group and a round 3-sphere,  give  the stereographic and gnomonic projection from $S^3$ and $\R^3$ in these coordinates, and explain how both enter a simple formula for  relating  orthonormal frames on  $S^3$ and $\R^3$ and for 
mapping zero-modes of  magnetic  Dirac operators on $S^3$  to zero-modes of  magnetic  Dirac operators on $\R^3$. While the conformal covariance of the  kernel of the Dirac operator is, of course, well-known, we are not aware of a treatment  which emphasises the role of the  gnomonic projection in the way we do. 
We illustrate our discussion  by constructing the linear magnetic zero-modes  on $\R^3$ from general eigenmodes on $S^3$ in our conventions.  

Sect.~3 contains our definition of vortex configurations on $S^3$ and some  of our main results: the equivalence between vortex configurations on $S^3$  and flat $su(2)$ gauge potentials, an expression for both in terms of  bundle maps $S^3\rightarrow S^3$, and the construction of magnetic zero-modes on both $S^3$ and $\R^3$  from vortex configurations. The allowed bundle maps can be expressed in terms of  two polynomials, thus leading to the promised formulae for magnetic zero-modes. The section ends with a brief discussion of how linear and vortex zero-modes can be combined to form new magnetic zero-modes. 

In Sect.~4 we review the definition of Popov vortices and  show that our vortex configurations on $S^3$ are equivariant descriptions of  them. We explain the relation between our bundle map 
$S^3\rightarrow S^3$ and the rational map $S^2\rightarrow S^2$ used by Manton for solving the Popov equations, and  interpret the pull-back to $S^2$  of the flat $su(2)$ gauge potential on $S^3$ in the language of Cartan geometry. 

Finally,  Sect.~5 contains a summary in the form of a diagram in Fig.~\ref{summary}, and an outlook onto  open questions.

\section{Magnetic Dirac operators on $S^3$ and on  $\R^3$}
\subsection{Conventions for $SU(2)$ and the Hopf map}
\label{conventions}
We  adopt the conventions of \cite{JS,JS2}, and    use  the $su(2)$  generators 
\bee
\label{liegen}
t_j=-\frac{i}{2}\tau_j, \; j=1,2,3,
\eee
where the $\tau_j$ are the Pauli matrices,  with  commutators
$[t_i,t_j]=\epsilon_{ijk}t_k$. Often we will work in terms of 
\bee
t_+=t_1+it_2, \quad  t_-=t_1-it_2
\eee
with commutators
\bee
[t_3,t_+]=-i t_+, \quad [t_3,t_-]=it_-, \quad [t_+,t_-]=-2it_3.
\eee
We parametrise an $SU(2)$ matrix $h$ in terms of a pair of complex numbers  $(z_1,z_2)$  via 
\bee
\label{complexp}
h= \begin{pmatrix}
z_1  & -\bar{z}_2\\
z_2 &
\phantom{-} \bar{z}_1 \end{pmatrix},
\eee
with the constraint $ |z_1|^2 + |z_2|^2= 1$ understood.
The (real) left-invariant 1-forms are defined via
\bee
h^{-1}dh = \sigma_1 t_1 + \sigma_2t_2 + \sigma_3 t_3,
\eee
and satisfy $
d\sigma_1 = - \sigma_2 \wedge \sigma_3$ and the cyclic permutations of this equation. 
Defining 
\bee
\sigma= \sigma_1+i\sigma_2, \qquad \bar{\sigma} = \sigma_1-i\sigma_2
\eee
we also note that 
\bee
\label{usefulsigma}
d\sigma = i\sigma \wedge \sigma_3, \quad d\sigma_3= \frac i 2 \bar{\sigma}\wedge \sigma,
\eee
and have the following expressions in  terms of complex coordinates:
\bee
\sigma= 2i (z_1dz_2 -z_2 dz_1), \quad 
\sigma_3= 2i (\bar{z}_1dz_1 + \bar{z}_2 dz_2).
\eee

The dual  vector fields  $X_j$, $j=1,2,3$,  generate the right-action $h\mapsto ht_j$ \cite{JS}. Their commutators are 
\begin{equation}
[X_i,X_j]=\epsilon_{ijk}X_k,
\end{equation}
so that,  in terms of
\begin{align}
X_+ = X_1 +iX_2, \qquad X_-=X_1 -iX_2,
\end{align}
we have 
\begin{equation}
[X_+,X_- ]=-2iX_3, \qquad [iX_3,X_\pm]=\pm X_\pm.
\end{equation}
In terms of complex coordinates, the right-generators are 
\begin{align}
X_+& =i (z_1\bar{\partial}_2 -z_2 \bar{\partial}_1), \nonumber  \\
X_-& =i (\bar{z}_2\partial_1 -\bar{z}_1 \partial_2), \nonumber \\
X_{3\;}& =\frac  i  2
 ( \bar{z}_1\bar{\partial}_1 +\bar{z}_2 \bar{\partial}_2 - z_1 \partial_1 -z_2 \partial_2).
\end{align}
The corresponding generators of the left-action $h\mapsto -t_jh$ are  denoted $Z_\pm$ and $Z_3$;  expressions in terms of our complex coordinates are given in \cite{JS}.
We also note that the Laplace operator on $S^3$  (with radius $2$) can be written as 
\bee
\label{S2S3}
\Delta_{S^3}=X_1^2 +X_2^2 +X_3^2= X_+X_- +iX_3 +X_3^2 = X_-X_+ - iX_3 +X_3^2, 
\eee
with an analogous expression in terms of $Z_j$.
Finally, we have the the pairings
\bee
\label{pairing}
\bar{\sigma}(X_+)= \sigma(X_-)=2, \quad \sigma_3(X_3)=1,
\eee
with all other pairings vanishing. 

In this paper we   parametrise  the 2-sphere via stereographic projection in terms of a coordinate $z\in \C$. We work in coordinates provided by stereographic projection from the south pole, referring to \cite{JS} for details and  the coordinate changes required to cover the south pole itself by projecting from the north pole. In terms of our complex coordinates \eqref{complexp} for $S^3$,  the Hopf map is  
\bee
\label{Hopf}
\pi: S^3 \rightarrow S^2, \quad h \mapsto z= \frac{z_2}{z_1}. 
\eee
Then
\bee
\label{locsec}
s: \C \rightarrow SU(2), \qquad z\mapsto \frac{1}{\sqrt{1+|z|^2}}\begin{pmatrix} 1  & -\bar{z} \\ z & \phantom{-}1 \end{pmatrix}
\eee
is a local section of the Hopf bundle. We will use it in this paper to switch between the equivariant description of sections of associated line bundles to local expressions for such sections.  
Defining spaces  of equivariant functions
\bee
\label{equidef}
C^\infty(S^3,\C)_N= \{F: S^3 \rightarrow \C|2iX_3F=NF\}, \quad N\in \N^0,
\eee
and writing $H$ for the hyperplane  bundle over $S^2$,
 one has the following commutative diagram \cite{JS,JS2}:
\bee
\label{equicom}
\begin{CD}
C^\infty(S^3,\C)_N  @>X_+>> C^\infty(S^3,\C)_{N+2} \\
@VV s* V   @VV s* V \\
 C^\infty(H^N) @> i((1+|z|^2)\bar \partial_z + \frac N 2 z)>>  C^\infty(H^{N+2}).
\end{CD}
\eee

\subsection{Stereographic projection and frames}
In the remainder of this paper, we  consider a 2-sphere of radius $\rs$ and a 3-sphere of radius $\ell$. Then 
\bee
\label{ONframe}
 \left(\frac \ell 2 \sigma_1, \frac \ell 2 \sigma_2, \frac \ell 2 \sigma_3\right)
\eee
provides a convenient orthonormal frame for $S^3$, and the metric is 
\bee
\label{S3met}
ds^2= \frac{\ell^2}{4}(\sigma_1^2 +\sigma_2^2 +\sigma_3^2). 
\eee

In order to make contact with the usual orientation on $\R^3$ after stereographic projection, we define the orientation on $S^3$ in terms of the volume element
\bee
\label{orient}
\Omega= \frac{\ell^3}{8}\sigma_2\wedge \sigma_1\wedge \sigma_3.
\eee

We  write elements of $\R^3$ as $
\vec{x} = (x_1,x_2,x_3)^t$, denote their length by 
$r=|\vec{x}|$, 
and assume that 
\bee
\label{r3frame}
\left(dx_1, dx_2, dx_3\right)
\eee
is an oriented orthonormal frame, so the metric  and volume element are 
\bee 
\label{euclid}
ds^2 =dx_1^2 +dx_2^2 +dx_3^2, \qquad 
dx_1\wedge dx_2 \wedge dx_3.
\eee

 Thinking of the 3-sphere of radius $\ell$ embedded in $\R^4$ with coordinates  $(y_1,y_2,y_3,y_4)$, the stereographic projection from the south pole  onto  $\R^3$ is 
\bee
\St: S^3\setminus\{\mathrm{south \;pole}\} \rightarrow \R^3, \quad (y_1,y_2,y_3,y_4)\mapsto (x_1,x_2,x_3) =
 \left(\frac{\ell y_{1}}{\ell+y_{4}},\frac{\ell y_{2}}{\ell+y_{4}},\frac{\ell y_{3}}{\ell+y_{4}}\right),
\eee
with inverse
\bee
\St^{-1}:\R^3 \rightarrow S^3, \quad (x_1,x_2,x_3)\mapsto (y_1,y_2,y_3,y_4) = \frac{\ell}{\ell^2+r^2} (2x_1\ell,2x_2\ell, 2x_3\ell, \ell^2-r^2).
\eee
Writing   $ \vec{\tau} =(\tau_1,\tau_2,\tau_3)$
for the vector whose components are the Pauli matrices, and identifying $(y_1,y_2,y_3,y_4) \in S^3$ with the unitary matrix $ (y_4\mathbb{I} + i y_1\tau_1 +  i y_2\tau_2+  i y_3\tau_3)/\ell$, the inverse stereographic projection is, up to scale, the map
\begin{align}
\label{hdef}
H:\R^3  &\rightarrow SU(2),  \nonumber \\
 \vec{x} &\mapsto\frac  {\ell^2-r^2}  {\ell^2 +r^2}  \mathbb{I}+\frac { 2i\ell}   {\ell^2+r^ 2} 
\vec{x} \cdot \vec{\tau} = \frac   {1 } {\ell^2 +r^2} \begin{pmatrix}   {\ell^2-r^2} +2i\ell x_3 & 2i\ell(x_1-ix_2) \\ 2i\ell(x_1+ix_2)
&  {\ell^2-r^2} -2i\ell x_3 \end{pmatrix},
\end{align}
so that  the Hopf projection in stereographic coordinates is 
\bee
\pi\circ H: (x_1,x_2,x_3) \mapsto  \frac{2\ell(x_1+ix_2)}{2x_3\ell + i (r^2-\ell^2)}.
\eee

In the following we will often need to pull back functions, spinors or forms on the 3-sphere  of radius $\ell$ to $\R^3$ with the inverse stereographic projection. To simplify notation we will write the pull-back in terms of $H$ rather than $\St^{-1}$, even though the two maps strictly speaking take values on 3-spheres of different radii.

A recurring theme in this paper is the interplay between the stereographic projection and the gnomonic projection, often used in cartography, which  maps great circles to straight lines. The inverse of the gnomonic projection is the map
\bee
G :\R^3\rightarrow SU(2), \qquad \vec{x} \mapsto \frac{\ell\, \mathbb{I}+ i\vec{x}\cdot \vec{\tau}}{\sqrt{\ell^2 + r^2}},
\eee
whose image  satisfies $G(\vec{x})^2=H(\vec{x})$.
This relation follows immediately from the explicit forms of the maps, but it
 also follows from the geometric meaning of $G$ and $H$, which is illustrated 
in Fig.~\ref{ghrelation}. 
 For fixed $\vec{x}$,  $G(\vec{x})$ and $H(\vec{x})$ are rotations about the same axis, and  it follows from elementary geometry that $G$ rotates by twice the rotation angle of $H$. As an aside we note that describing spherical geometry in terms of $\R^3$ via pull-back with $H$ and $G$ is analogous to describing hyperbolic geometry in terms of, respectively, the Poincar\'e and the Beltrami-Klein models.

\begin{figure}[ht]
\centering
\includegraphics[width=10truecm]{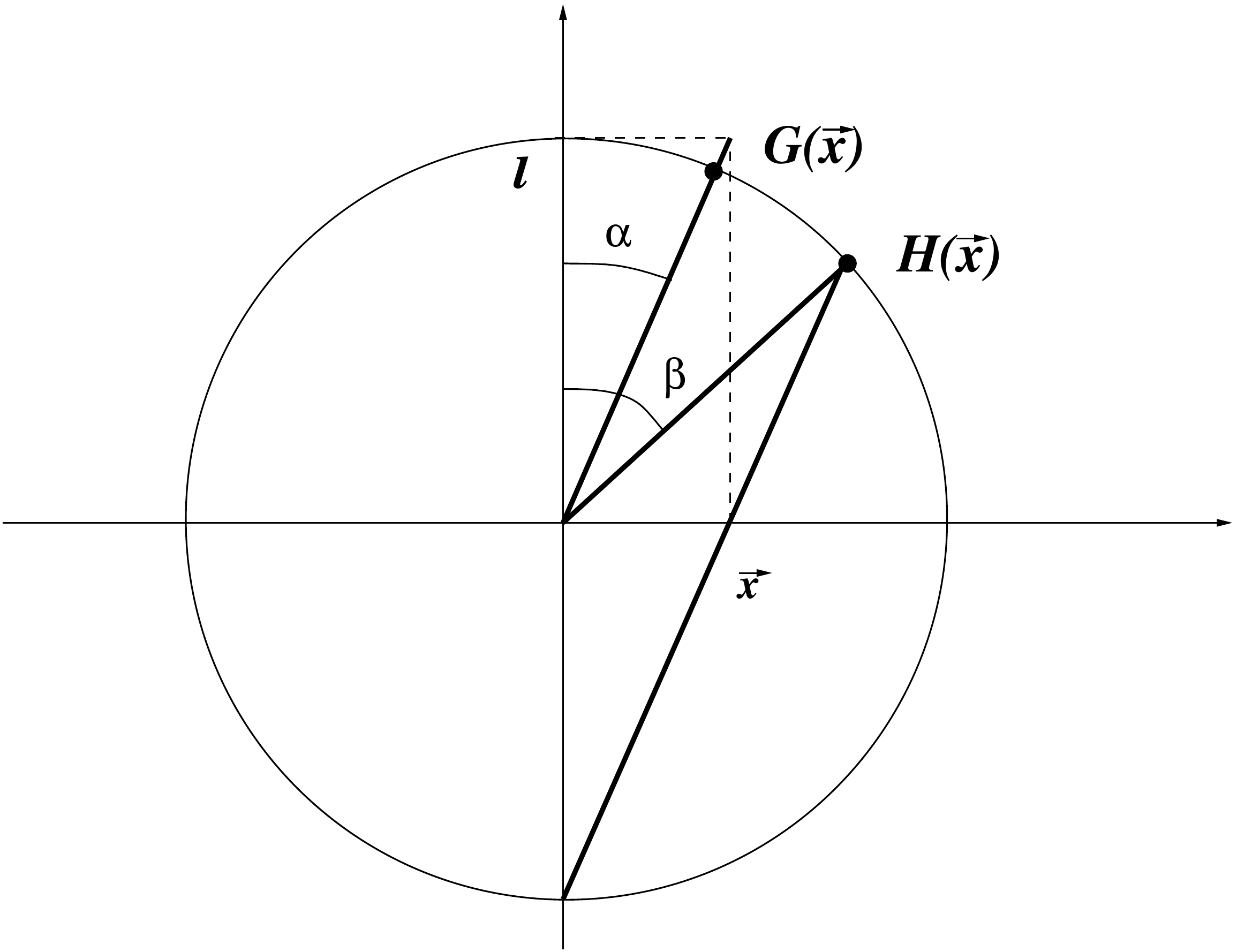}
 \caption{Geometry of the  relation between the gnomonic and stereographic  projections $G,H: \R^3\rightarrow S^3$ defined in the main text. The elementary relation $\beta= 2\alpha$ implies $H(\vec{x})= G^2(\vec{x})$.}
 \label{ghrelation}
\end{figure}

The maps $G$ and $H$ can be used to pull back the left-invariant 1-forms $\sigma_j$, $j=1,2,3$,  on the 3-sphere. The relation of the resulting frames on $\R^3$ to each other and to the standard frame \eqref{r3frame} is interesting, and important for the remainder of this paper. We therefore collect the relevant results here. 
Defining the scale function
\bee
\label{Omdef} 
\Omega=\frac{\ell^2 +r^2}{4\ell},
\eee
we introduce 1-forms $\theta_j$, $j=1,2,3$, on $\R^3$ via
\bee
\label{pullbacksigma}
H^{-1} dH =\frac{i}{2\Omega} \vec{\theta}\cdot\vec{\tau},   \quad \text{or} \quad 
\theta_j = - \Omega H^*\sigma_j.
\eee
Then we find 
\begin{align}
\label{rotatingsigma}
 \vec{\theta} \cdot \vec{\tau}  = \frac{1}{\ell^2 + r^2} \left(2(\vec{x}\cdot \vec{\tau})( \vec{x}\cdot d  \vec{x}) + (\ell^2 -r^2) \vec{\tau}\cdot d\vec{x} + 2\ell (\vec{x}\times d\vec{x}) \cdot \vec{\tau} \right)  = G^{-1} ( d \vec{x}\cdot \vec{\tau}) G.
 \end{align}
 In other words,  pulling back the 1-forms  $\sigma_1,\sigma_2$ and $\sigma_3$ via $H$ gives a frame which is related to the standard frame 
\eqref{r3frame} by rotation with $G$ (acting in its adjoint representation), a reflection in the origin  and rescaling by $\Omega$.  For later use we note 
\bee
\label{theta3exp}
\theta_3= \frac{1}{\ell^{2}+r^2}\left(
2(x_1x_3-\ell x_2)dx_1 +  2(x_2x_3+\ell x_1)dx_2 +  (\ell^{2}-r^{2}+2x_3^{2})dx_3\right).
\eee

\begin{lem}
The pull-backs $G^{-1}dG$ and $H^{-1}dH$ of the Maurer-Cartan form on $SU(2)$ are related via
\bee
 \label{HG} 
 H^{-1}dH  = G^{-1}dG + G^{-1}( G^{-1} dG)G ,
\eee
and the inverse relation can be expressed as
\bee
\label{GH}
G^{-1} dG = \frac 1 2  H^{-1}dH  + \star ( d\Omega\wedge H^{-1} dH). 
\eee
\end{lem}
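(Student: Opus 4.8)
\noindent\emph{Proof strategy.} The first identity \eqref{HG} is purely algebraic. Since $G(\vec x)^2 = H(\vec x)$ was just established, the product rule gives $dH = (dG)\,G + G\,(dG)$, and left-multiplication by $H^{-1} = G^{-2}$ yields at once
\[
H^{-1}dH = G^{-1}(G^{-1}dG)\,G + G^{-1}dG ,
\]
which is \eqref{HG}; no further input is needed.

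For the inverse relation \eqref{GH} the plan is to compute both sides explicitly in the coordinates $\vec x$ on $\R^3$ and compare. First I would work out the left-hand side: writing $\rho = \sqrt{\ell^2+r^2}$, so that $G = \rho^{-1}(\ell\,\mathbb{I} + i\,\vec x\cdot\vec\tau)$ and $G^{-1} = \rho^{-1}(\ell\,\mathbb{I} - i\,\vec x\cdot\vec\tau)$, I would use $d(\rho^{-1}) = -\rho^{-3}(\vec x\cdot d\vec x)$, the Pauli identity $(\vec a\cdot\vec\tau)(\vec b\cdot\vec\tau) = (\vec a\cdot\vec b)\,\mathbb{I} + i(\vec a\times\vec b)\cdot\vec\tau$, and $(\vec x\cdot\vec\tau)^2 = r^2\,\mathbb{I}$. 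The scalar ($\mathbb{I}$-valued) terms cancel and one is left with
\[
G^{-1}dG = \frac{i}{\ell^2+r^2}\big(\ell\,d\vec x + \vec x\times d\vec x\big)\cdot\vec\tau .
\]

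Next I would evaluate the right-hand side. Combining \eqref{pullbacksigma}, \eqref{rotatingsigma} and \eqref{Omdef} gives
\[
H^{-1}dH = \frac{2i\ell}{(\ell^2+r^2)^2}\Big(2(\vec x\cdot d\vec x)\,\vec x + (\ell^2-r^2)\,d\vec x + 2\ell\,(\vec x\times d\vec x)\Big)\cdot\vec\tau ,
\]
while $d\Omega = \tfrac{1}{2\ell}(\vec x\cdot d\vec x)$ by \eqref{Omdef}. Since the Hodge star on $\R^3$ acts only on the form degree, I would write $H^{-1}dH = i\sum_k V_k\,\tau_k$ with ordinary $1$-forms $V_k$, so that $\star(d\Omega\wedge H^{-1}dH) = i\sum_k \star(d\Omega\wedge V_k)\,\tau_k$, and apply the elementary three-dimensional rule $\star\big((\vec a\cdot d\vec x)\wedge(\vec b\cdot d\vec x)\big) = (\vec a\times\vec b)\cdot d\vec x$ (with respect to the orientation $dx_1\wedge dx_2\wedge dx_3$) with $\vec a = \vec x/(2\ell)$ and $\vec b$ read off from each $V_k$. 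Adding $\tfrac12 H^{-1}dH$ to the result and simplifying with $\vec x\times(\vec x\times d\vec x) = (\vec x\cdot d\vec x)\,\vec x - r^2\,d\vec x$ and $\vec x\cdot(\vec x\times d\vec x) = 0$, the contributions proportional to $(\vec x\cdot d\vec x)\,\vec x$ cancel and the remainder collapses to $\tfrac{i}{\ell^2+r^2}(\ell\,d\vec x + \vec x\times d\vec x)\cdot\vec\tau$, which is exactly the left-hand side found above; this establishes \eqref{GH}.

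I do not expect a conceptual obstacle: the only real effort is the bookkeeping in the last step, where the Hodge star must be propagated consistently through the $su(2)$-index and the vector triple-product identities kept straight. A marginally shorter variant avoids differentiating $G$ twice: by \eqref{HG}, \eqref{GH} is equivalent to $G^{-1}(G^{-1}dG)\,G = \tfrac12 H^{-1}dH - \star(d\Omega\wedge H^{-1}dH)$. Here $X\mapsto G^{-1}XG$ is just the $SO(3)$ rotation representing $G^{-1}$ in the adjoint action, and one checks that it sends $i\,\vec v\cdot\vec\tau$ to $\tfrac{i}{\ell^2+r^2}\big((\ell^2-r^2)\vec v + 2\ell\,(\vec x\times\vec v) + 2(\vec x\cdot\vec v)\,\vec x\big)\cdot\vec\tau$; applying this with $\vec v = \tfrac{1}{\ell^2+r^2}(\ell\,d\vec x + \vec x\times d\vec x)$ and comparing with $\tfrac12 H^{-1}dH - \star(d\Omega\wedge H^{-1}dH)$ needs only the same Hodge-star and triple-product manipulations. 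Either way, the lemma reduces to elementary vector algebra on $\R^3$.
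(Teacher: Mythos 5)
Your proposal is correct and takes essentially the same route as the paper: \eqref{HG} follows from the product rule applied to $H=G^2$, and \eqref{GH} is verified by explicit computation using the formula $G^{-1}dG = i(\ell\, d\vec{x}+\vec{x}\times d\vec{x})\cdot\vec{\tau}/(\ell^2+r^2)$, $d\Omega = \vec{x}\cdot d\vec{x}/(2\ell)$, and the three-dimensional Hodge-star identity for wedge products of 1-forms. The only organisational difference is that the paper first uses \eqref{HG} to conjugate \eqref{GH} by $G$, so that only the simpler combinations $dG\,G^{-1}\pm G^{-1}dG$ appear in the verification --- which is essentially the shorter variant you sketch in your final paragraph.
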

\begin{proof}
The formula \eqref{HG} is an immediate consequence of $H=G^2$. To show the inverse relation, we use \eqref{HG} to write \eqref{GH} as 
\bee
\label{GHrewrite}
\star ( 2d\Omega\wedge (dG G^{-1}  +  G^{-1} dG))=dG G^{-1} - G^{-1} dG.
\eee
Computing 
\bee
 G^{-1} dG= \frac{i\ell d\vec{x} \cdot \vec{\tau} + i \vec{x}\times d\vec{x}\cdot \vec{\tau}}{\ell^2 +r^2},
\eee
we have 
\bee
\label{Gformulae}
dG G^{-1}  +  G^{-1} dG=  \frac{2i\ell d\vec{x} \cdot \vec{\tau}}{\ell^2 +r^2}, \quad dG G^{-1}  -  G^{-1} dG =
-\frac{2 i \vec{x}\times d\vec{x}\cdot \vec{\tau}}{\ell^2 +r^2}.
\eee
With $   2 d\Omega= \vec{x}\cdot d\vec{x}/\ell $  and 
\bee
\star (\vec{x}\cdot d\vec{x} \wedge d\vec{x}\cdot \vec{\tau})= d\vec{x}\times \vec{x} \cdot\vec{ \tau},
\eee
one deduces  \eqref{GHrewrite} and hence \eqref{GH}. 
\end{proof} 

\subsection{Magnetic Dirac operators and their zero-modes}
In the frame \eqref{ONframe} for a 3-sphere of radius $\ell$,   a global gauge potential for the  spin connection is 
\bee
\Gamma_{S^3} = \frac 12 h^{-1}dh,
\eee
Thus, the 
Dirac operator associated to the frame \eqref{ONframe} is 
\begin{align}
\label{S3Diracungauged}
\Dslash_{S^3} &=\frac{2i}{\ell} \left( \tau_1 X_1 + \tau_2 X_2 +  \tau_3 X_3 \right)  +\frac{3}{2\ell}\nonumber \\
& = \frac{2i}{\ell} \begin{pmatrix} X_3  & X_-  \\   X_+ & -X_3 \end{pmatrix} +\frac{3}{2\ell}.
\end{align}
Minimal coupling to  an abelian gauge potential  $A =A_1 \sigma_1 + A_2 \sigma_2 + A_3\sigma_3$  gives
\begin{align}
\label{S3Dirac}
\Dslash_{S^3,A} &=\frac{2i}{\ell} \left( \tau_1 (X_1 + iA_1) + \tau_2 (X_2 +iA_2)+  \tau_3 (X_3+iA_3) \right)  +\frac{3}{2\ell}\nonumber \\
& = \frac{2i}{\ell} \begin{pmatrix} (X_3 +iA_3)  & (X_- +iA_-)  \\   (X_+ +iA_+) & -(X_3+iA_3) \end{pmatrix} +\frac{3}{2\ell}. 
\end{align}

The Dirac operator on $\R^3$   associated to the frame \eqref{r3frame} and minimally coupled to an abelian gauge potential $A= A_1dx_1 + A_2 dx _2 + A_3dx_3$ is 
\bee
\label{flatDirac}
\Dslash_{\R^3, A}  = i\tau_j(\partial_j + i A_j), 
\eee
where  we used $\partial_1= \partial/\partial x_1$ etc. 

As we saw in the previous section,  the frame \eqref{ONframe}  pulled back to $\R^3$ via  $H$ and
the flat frame \eqref{r3frame} are related by a rotation with $G$,  a reflection and rescaling by \eqref{Omdef}. 
This implies a simple relation between the zero-modes of the Dirac operators on $S^3$ and $\R^3$.

\begin{lem}
\label{conformal}
If $\Psi: S^3\rightarrow \C^2  $ is a zero-mode of the Dirac operator \eqref{S3Dirac} on $S^3$  coupled to the $U(1)$ gauge field  $A$, then 
\bee
\label{r3spinor}
\Psi_H = G\Omega^{-1} H^*\Psi
\eee
is a zero-mode of the  Dirac operator $\Dslash_{\R^3, H^*A} $ on   Euclidean 3-space  coupled to the connection $H^*A$.
\end{lem}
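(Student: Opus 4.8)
The plan is to establish the conformal covariance of the kernel of the Dirac operator by tracking how each ingredient — the frame, the spin connection, and the gauge potential — transforms under the stereographic map, and then to recognise the combination $G\Omega^{-1}H^*$ as precisely the intertwiner dictated by these transformations. First I would pull back the Dirac equation $\Dslash_{S^3,A}\Psi=0$ via $H$. By \eqref{pullbacksigma} we have $H^*\sigma_j=-\theta_j/\Omega$, and by \eqref{rotatingsigma} the coframe $(\theta_1,\theta_2,\theta_3)$ is obtained from $(dx_1,dx_2,dx_3)$ by the pointwise rotation $\mathrm{Ad}(G^{-1})$ together with a reflection. So, up to the scale $\Omega$ and this rotation-plus-reflection, the pulled-back frame on $\R^3$ is the flat frame. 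The upshot is that $H^*\Dslash_{S^3,A}$, expressed in the flat frame, is $\Omega^{-1}$ times the flat Dirac operator $\Dslash_{\R^3,H^*A}$ conjugated by the local rotation relating the two frames — and that local rotation, acting on spinors, is exactly multiplication by $G$ (since $G\in SU(2)$ is the spin lift of the $SO(3)$ element $\mathrm{Ad}(G)$, and the reflection is absorbed by the orientation choice \eqref{orient}).

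Concretely, the key steps are: (i) write $\Dslash_{S^3,A}=\frac{2i}{\ell}\tau_j(X_j+iA_j)+\frac{3}{2\ell}$ and note that the $X_j$ are dual to the $\sigma_j$, so under $H^*$ the vector fields dual to $\theta_j$ are $\Omega$ times the pulled-back $X_j$; (ii) use the standard behaviour of the Dirac operator under a conformal rescaling $g\mapsto \Omega^2 g_{\mathrm{flat}}$ in three dimensions, namely $\Dslash_g = \Omega^{-2}\,\Omega^{1/2}\,\Dslash_{g_{\mathrm{flat}}}\,\Omega^{-1/2}$ acting on the appropriately weighted spinor, which in $n=3$ gives the weight $\Omega^{-1}$ appearing in \eqref{r3spinor}; (iii) implement the frame rotation: if two orthonormal frames are related by $g\in SO(3)$ with spin lift $\tilde g\in SU(2)$, then the corresponding Dirac operators satisfy $\Dslash^{(2)} = \tilde g\,\Dslash^{(1)}\,\tilde g^{-1}$ on spinors related by $\psi^{(2)}=\tilde g\psi^{(1)}$, and here $\tilde g=G$; (iv) check that the spin connection of the $H$-frame assembles correctly — this is where the identity \eqref{HG}, $H^{-1}dH = G^{-1}dG + G^{-1}(G^{-1}dG)G$, does the work, because $\Gamma_{S^3}=\tfrac12 h^{-1}dh$ pulled back is $\tfrac12 H^{-1}dH$, and \eqref{HG} exhibits this as the $G$-gauge-transform of the spin connection $G^{-1}dG$ of the $G$-frame plus the flat-frame piece, matching exactly the transformation law from step (iii). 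Combining (i)–(iv), $H^*(\Dslash_{S^3,A}\Psi)=0$ becomes $\Dslash_{\R^3,H^*A}\big(G\Omega^{-1}H^*\Psi\big)=0$ up to an overall nonvanishing scalar factor.

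The main obstacle I anticipate is bookkeeping the inhomogeneous $\frac{3}{2\ell}$ term in \eqref{S3Dirac} together with the $d\Omega$-derivative terms generated when $\Omega^{-1}$ and $G$ are pulled through the first-order operator: these must cancel precisely, and this cancellation is the analytic content of conformal covariance in dimension $3$ (the exponent $(n-1)/2=1$ is what makes $\Omega^{-1}$ the right weight). Getting the reflection/orientation sign right — recall the orientation \eqref{orient} was chosen with $\sigma_2\wedge\sigma_1\wedge\sigma_3$ precisely to match the usual orientation on $\R^3$ after projection — is a secondary but genuine subtlety, since a wrong sign there would replace the Dirac operator by its formal adjoint with a flipped mass term. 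A clean way to organise the whole computation, avoiding most of the index gymnastics, is to work directly with the "square-root" form $G^2=H$: write $\Psi_H = G\,\Omega^{-1}\,H^*\Psi$, compute $\Dslash_{\R^3,H^*A}\Psi_H$ using the Clifford identities $\tau_j\tau_k=\delta_{jk}+i\epsilon_{jkl}\tau_l$ and the explicit formula \eqref{Gformulae} for $dG\,G^{-1}\pm G^{-1}dG$ from the preceding lemma, and watch the terms reorganise into $\Omega^{-2}\,G\,H^*(\Dslash_{S^3,A}\Psi)$. I would present the argument in the conformal-covariance form for conceptual transparency and relegate the $\Omega$-derivative cancellation to a short verification using $2\,d\Omega=\vec{x}\cdot d\vec{x}/\ell$.
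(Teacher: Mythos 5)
Your proposal is correct and follows essentially the same route as the paper's proof: invoke equivariance of the Dirac operator under conformal rescaling and frame rotations, identify $G$ as the spin lift of the rotation relating the $H$-pulled-back frame to the flat frame via \eqref{rotatingsigma}, and verify the cancellation of the spin-connection and $d\Omega$ terms using \eqref{HG} and \eqref{Gformulae}. (Only nitpick: your generic conformal-covariance formula carries the $n=2$ exponents $\pm\tfrac12$ rather than the $n=3$ exponents $\pm 1$, though your stated conclusion that the weight is $\Omega^{-1}$ is the correct one.)
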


\begin{proof}
Since the spin connection in the frame \eqref{r3frame} is manifestly zero, it follows from the equivariance of the Dirac operator under scaling and frame rotations that  pulling back zero-modes of the Dirac operator on $S^3$ to $\R^3$ and applying the transformation $G\Omega^{-1}$ gives zero-modes of the Dirac operator on $\R^3$ in the flat frame \eqref{r3frame}. It is instructive to check this explicitly.
The pull-back of the spin connection  is 
\bee
H^*\Gamma_{S^3}  = \frac 12 H^{-1}dH.
\eee
Thus, using \eqref{HG}, 
 \begin{align}
d + \frac 1 2 H^{-1} dH =\Omega G^{-1}\left(d+ \frac 12\left(  G  dG^{-1} +  G^{-1} dG\right) + \Omega^{-1} d\Omega     \right)\Omega^{-1} G.
\end{align}
Combining  \eqref{Gformulae} and 
\bee
\Omega^{-1}d\Omega=\frac{2\vec{x}\cdot d\vec{x}}{\ell^{2}+r^{2}},
\eee
one checks that 
\bee
\tau^{j}\iota_{\partial_j}\left(\frac 12\left(  G  dG^{-1} +  G^{-1} dG\right) + \Omega^{-1} d\Omega\right) =-\frac{2\vec{x}\cdot \vec{\tau}}{\ell^{2}+r^{2}}+\frac{2\vec{x}\cdot \vec{\tau}}{\ell^{2}+r^{2}}=0.
\eee
Combining these results, one checks that  the pull-back of the  Dirac operator  on $S^3$ coupled to the spin connection and the  abelian connection $A$ in the frame \eqref{r3frame} is 
\begin{align}
\frac \ell 2 H^*\Dslash_{S^3,A} & = \tau^j\iota_{H^*X_j} (d + \frac 1 2 H^{-1} dH +iH^*A) \nonumber \\
&= -\Omega^{-1} G^{-1}\tau^{j}\iota_{\partial_j}G(d + \frac 1 2 H^{-1} dH +iH^*A)\nonumber \\
&= -G^{-1}\tau^{j}\iota_{\partial_j}
 (d+ iH^*A)\Omega^{-1} G,
\end{align}
which implies the claimed relation between zero-modes of $\Dslash_{S^3,A}$ and $\Dslash_{\R^3,H^*A}$.
Note that the components   of the  pull-back abelian connection relative to the frame  \eqref{r3frame} are related to the  components  in the expansion $A=\vec{A}\cdot \vec{\sigma}$  via
 \bee
 H^*A = \vec{A}_H\cdot d\vec{x}, \qquad \vec{A}_H\cdot \vec{\tau} =-\frac{1}{\Omega} G^{-1}  H^{*}\vec{A}\cdot \vec{\tau}G.
 \eee
 \end{proof}
 
This lemma can be used to construct magnetic zero-modes on $\R^3$  from magnetic zero-modes on $S^3$. For the family constructed explicitly by Loss and Yau in \cite{LY}, which we call linear in this paper, this was observed in \cite{DM} and elaborated in \cite{Min}, where this family was obtained from eigenmodes of the Dirac operator on $S^3$. The corresponding argument for the  family of vortex zero-modes  is one of the main  results of our Sect.~\ref{magfields}. 

 We review the  construction of the linear zero-modes  very briefly  here because we will need them later in this paper, expressed in our conventions for parametrising $S^3$ in terms of two complex variables. We define the functions
\begin{equation}
\label{Y}
Y^j_{sm}(z_1,z_2) =C_{jms}
\sum_k\frac{  (-1)^{-k} }{(j+m-k)!k!(j-s -k)!(s -m + k)!}z_1^{s-m+k}z_2^{j+m-k}\bar{z}_1^k\bar{z}_2^{j-s-k},
\end{equation}
where  $C_{jms}$ is an overall normalisation constant
\bee
C_{jms}=(-1)^{j-s}  \left((j+s)!(j-s)!(j+m)!(j-m)!\right)^{\frac 12 },
\eee
and
\bee
j\in\frac 1 2 \mathbb{N}^0, \quad s, m=-j,-j+1,\ldots,j-1,j.
\eee
The summation index  $k$ runs over the values so that the factorials are well defined.
These functions are orthonormal and satisfy
\begin{equation}
\label{FI}
\Delta_{S^3}Y^j_{sm} = -j(j+1)Y^j_{sm}, \ \ iZ_3Y^j_{sm} = mY^j_{sm}, \ \ iX_3Y^j_{sm} = sY^j_{sm}.
\end{equation}
as well as
\begin{align}
iX_+ Y^j_{sm}= \sqrt{j(j+1)-s(s+1)}Y^j_{s+1,m}, \qquad
iX_- Y^j_{sm}= \sqrt{j(j+1)-s(s-1)}Y^j_{s-1,m}.
\end{align}

Using the explicit expression for the Dirac operator  given in  \eqref{S3Diracungauged}, one deduces that  
\begin{equation}
\Psi^{j+}_{sm} = \frac{1}{2j+1}\begin{pmatrix} \sqrt{j+s+1} Y^j_{sm} \\ \sqrt{j-s} Y^j_{s+1,m}\end{pmatrix}
\end{equation}
is an eigenspinor of $\Dslash_{S^3}$ with eigenvalue 
\bee
\lambda_+=  \frac 1 \ell\left(\frac 1 2 +\left(2j+1 \right)\right),
\eee
and that
\begin{equation}
\Psi^{j-}_{sm} =  \frac{1}{2j+1}\begin{pmatrix} -\sqrt{j-s} Y^j_{sm} \\ \sqrt{j+s+1} Y^j_{s+1,m}\end{pmatrix}
\end{equation}
is an eigenspinor of $\Dslash_{S^3}$ with eigenvalue 
\bee
\lambda_-=  \frac 1 \ell\left(\frac 1 2 -\left(2j+1 \right)\right).
\eee
The degeneracy of each eigenvalue is $(2j+1)(2j+2)$. 

We can now use a  trick introduced by Loss and Yau \cite{LY} to obtain zero-modes of the gauged Dirac operator from general eigenmodes of the ungauged Dirac operator. Setting 
\begin{equation}
\frac{2}{\ell} A_i = \lambda \frac{\Psi^\dagger \tau_i \Psi}{\Psi^\dagger \Psi}, \quad i=1,2,3,
\end{equation}
where $\Psi\neq 0$
and using $\Psi^\dagger \tau_i \Psi \tau_i =2 \Psi \Psi^\dagger  -{\Psi^\dagger \Psi}\mathbb{I}$, 
one then has
\begin{equation}
\frac{2}{\ell}\vec{A}\cdot\vec{\tau }\Psi=\lambda\Psi.
\end{equation}
With $A= A_i\sigma_i$,  this implies
\begin{equation}
\Dslash_{S^3}\Psi =\lambda\Psi \Leftrightarrow  \Dslash_{S^3,A}\Psi =0.
\end{equation}
In general, one needs to check the validity of this result  at the zeros of  $\Psi$. We will do this in our application of linear zero-modes later. Assuming the zeros are dealt with, we can then 
apply  Lemma \ref{conformal} to obtain   magnetic zero-modes on Euclidean 3-space of the form
\bee
\label{linearmode}
\Psi_H= G\Omega^{-1} H^* \Psi^{j\pm}_{sm}.
\eee

\section{Vortex equations and magnetic zero-modes}
\subsection{Vortex equations on $S^3$}
\label{vorts3}
We are now ready to introduce the 3-dimensional geometries which  will lead us to the smooth vortex zero-modes promised in the Introduction, and provide the link with vortices on the 2-sphere. First, we  define  vortex configurations on the 3-sphere.
\begin{dfn}
Let $n$ be a positive integer,  $A$   be a 1-form on $S^3$
and $\Phi: S^3\rightarrow \C$  be a complex-valued function.
We say that the pair $(\Phi,A)$ is a vortex configuration on  $S^3$ with vortex number $2n-2$ if the following conditions hold:
\begin{enumerate}
\item Normalisation :
\bee
\label{normalise}
A(X_3)=n-1,
\eee
\item Equivariance:
\bee
\label{equiva}
 \qquad  {\mathcal L}_{X_3}A =0,  \quad i{\mathcal L}_{X_3}\Phi =(n-1)\Phi,
\eee
\item 
Vortex equations:
\begin{align}
\label{vortexs3}
(d\Phi +i A\Phi)\wedge \sigma = 0, \qquad F_A  = \frac i 2 (|\Phi |^2-1)\bar\sigma \wedge \sigma,
\end{align}
where $F_A=dA$.
\end{enumerate}
\end{dfn}
The definition is such that vortex configurations are  mapped into vortex configurations by abelian gauge transformations of the form 
\bee
\label{abeliangauge}
(\Phi,A) \mapsto (e^{-i\alpha}\Phi,A+d\alpha), \qquad \alpha \in C^\infty(S^3), \quad X_3\alpha=0.
\eee

Our vortex configurations on $S^3$ can   be interpreted as an equivariant description of vortices on $S^2$ as follows. The normalisation condition \eqref{normalise} means that $iA$ may be viewed as the connection 1-form on the total space $S^3/\Z_{2n-2}$ (Lens space) of a  $U(1)$ bundle over $S^2$ of degree $2n-2$. Comparing with \eqref{equidef} and referring to \cite{JS} for details, the equivariance requirement \eqref{equiva} means that $\Phi$ is the equivariant form of a   section of the associated line bundle (the ($2n-2)$-th power of the hyperplane bundle). In fact, we will show in  Sect.~\ref{popovsect}, Lemma~\ref{pulledvortex},  that the vortices on $S^2$  which are equivariantly described by our vortex configurations are  Popov vortices.

We note that contracting the first vortex equation  with $(X_3,X_-)$ and using \eqref{pairing} gives $X_3\Phi +iA(X_3)\Phi=0$, which is satisfied by virtue of the normalisation and equivariance condition. Contracting it with $(X_+,X_-)$ gives 
\bee
\label{useforDirac}
X_+\Phi +iA(X_+)\Phi =0.
\eee
We will return to this equation later in this section and also  in  Sect.~\ref{popovsect}. However,  we  first show that the  vortex equations on $S^3$ can be  interpreted  in terms of a flat non-abelian gauge field.

The following theorem shows that any vortex configuration can be expressed in terms of  the pull-back of the  Maurer-Cartan form $h^{-1}dh$ on $SU(2)$ via a  bundle map $U:S^3 \rightarrow S^3$ of the Hopf fibration covering a rational map $S^2\rightarrow S^2$. Since the Maurer-Cartan form encodes the frame \eqref{ONframe} of the round 3-sphere, its pull-back encodes the pull-back of the round metric with $U$. In that sense, this result is a  3-dimensional  version of   Baptista's interpretation of vortices   as deformed 2-dimensional geometry.  
\begin{thm} 
\label{flatvor}
 A vortex configuration of degree $2n-2$ on $S^3$  determines a gauge potential 
 for a flat $su(2)$ connection on $S^3$ satisfying the normalisation condition 
\bee
\label{nonabnorm}
\A(X_3)=nt_3
\eee
via the following expression: 
\bee
\label{abnonab}
\A=(A+\sigma_3)t_3 +\frac 1 2( \Phi\sigma t_-+ \bar{\Phi}\bar{\sigma} t_+).
\eee
A  gauge potential for a  flat $su(2)$ connection on $S^3$ satisfying \eqref{nonabnorm} and of the form \eqref{abnonab} can be trivialised as  $\A=U^{-1}dU$,  where  $U:S^3\rightarrow S^3 $ has degree $n^2$ and is a bundle map of the Hopf fibration, covering a rational map $R:S^2\rightarrow S^2$ of degree $n$.
 Up to a $U(1)$ gauge transformation \eqref{abeliangauge}, one can choose the bundle map $U$ to have  the form
\bee
\label{trivialisation}
U:(z_{1},z_{2})\mapsto \frac{1}{\sqrt{|P_{1}|^{2}+|P_{2}|^{2}}} \begin{pmatrix} P_{1} & -\bar{P}_2 \\  P_{2} & \phantom{-}\bar{P}_1\end{pmatrix},
\eee 
where $P_1,P_2$ are homogeneous polynomials of degree $n$ with no common zeros
\bee
P_1= a_0z_1^n + a_1z_1^{n-1}z_2+\ldots + a_n z_2^n, \qquad 
P_2= b_0z_1^n + b_1z_1^{n-1}z_2+\ldots + b_n z_2^n,
\eee
and $a_0,b_0,a_n,b_n$ all non-zero. 

The vortex configuration  $(\Phi,A)$ can be computed from the bundle map $U$ via
\bee
\label{UphiA}
U^*\sigma =\Phi\sigma, \qquad  A= U^*\sigma_3-\sigma_3,
\eee
and  is  given in terms of $P_1,P_2$  by
\bee
\Phi = \frac{P_1\partial_2 P_2 - P_2 \partial_2 P_1}   {z_1(|P_1|^2  + |P_2|^2)},
\eee
and 
\bee
\label{aformula}
A= (n-1)\sigma_3 + \frac{i}{2} X_-\ln(|P_1|^2  + |P_2|^2)\sigma - \frac{i}{2}  X_+\ln (|P_1|^2  + |P_2|^2) \bar{\sigma}.
\eee
\end{thm}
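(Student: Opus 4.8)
The theorem bundles together several claims: (i) that the $su(2)$ gauge potential $\mathcal{A}$ built from a vortex configuration via \eqref{abnonab} is flat; (ii) that any flat $su(2)$ potential of that form satisfying the normalisation \eqref{nonabnorm} can be written $\mathcal{A}=U^{-1}dU$ with $U$ a bundle map of the Hopf fibration of degree $n^2$ covering a degree-$n$ rational map; (iii) that $U$ can be gauged into the polynomial form \eqref{trivialisation}; and (iv) the explicit formulae for $\Phi$ and $A$ in terms of $P_1,P_2$. I would prove these in that order, since each builds on the last.

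**Flatness.** For (i), I would compute $F_{\mathcal{A}}=d\mathcal{A}+\mathcal{A}\wedge\mathcal{A}$ directly by substituting \eqref{abnonab}, expanding in the basis $t_3,t_+,t_-$, and using the structure relations \eqref{usefulsigma} for $d\sigma,d\sigma_3$ together with the $su(2)$ brackets $[t_+,t_-]=-2it_3$, $[t_3,t_\pm]=\mp it_\pm$. The $t_3$-component of $F_{\mathcal{A}}$ will collect $dA$, a term $d\sigma_3=\tfrac{i}{2}\bar\sigma\wedge\sigma$, and a term $-\tfrac{i}{2}|\Phi|^2\,\bar\sigma\wedge\sigma$ coming from the bracket of the off-diagonal pieces; this vanishes precisely by the second vortex equation $F_A=\tfrac{i}{2}(|\Phi|^2-1)\bar\sigma\wedge\sigma$. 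The $t_\pm$-components will collect $(d\Phi+iA\Phi)\wedge\sigma$ (plus a $d\sigma$ contribution that is absorbed once one uses $d\sigma=i\sigma\wedge\sigma_3$), and these vanish by the first vortex equation. So flatness is equivalent to the vortex equations — I expect this to be a clean but slightly fiddly two-form computation. The normalisation \eqref{nonabnorm} follows by contracting \eqref{abnonab} with $X_3$, using the pairings \eqref{pairing} (which kill the $\sigma,\bar\sigma$ terms since $\sigma(X_3)=\bar\sigma(X_3)=0$) and the abelian normalisation $A(X_3)=n-1$.

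**Trivialisation and the polynomial form.** For (ii), flatness and simple-connectedness of $S^3$ give $\mathcal{A}=U^{-1}dU$ for some smooth $U:S^3\to SU(2)$, unique up to left multiplication by a constant. That $U$ is a bundle map of the Hopf fibration is forced by the off-diagonal structure of \eqref{abnonab}: writing $U=\begin{pmatrix}u_1&-\bar u_2\\u_2&\phantom{-}\bar u_1\end{pmatrix}$ and comparing $U^*\sigma$, $U^*\sigma_3$ with $\mathcal{A}$ gives exactly \eqref{UphiA}, i.e. $U^*\sigma=\Phi\sigma$ and $A=U^*\sigma_3-\sigma_3$; since $\sigma$ is (up to factor) $z_1dz_2-z_2dz_1$, the relation $U^*\sigma=\Phi\sigma$ says the map $(z_1,z_2)\mapsto(u_1,u_2)$ descends to a holomorphic self-map of $S^2=\mathbb{C}P^1$, hence a rational map $R$. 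The normalisation $\mathcal{A}(X_3)=nt_3$ translates, via the equivariance function spaces \eqref{equidef}, into $u_1,u_2$ being degree-$n$ homogeneous in the sense $2iX_3u_a=nu_a$; combined with holomorphicity in the $\mathbb{C}P^1$ sense and smoothness (no common zeros) this forces $u_a=P_a/\sqrt{|P_1|^2+|P_2|^2}$ with $P_1,P_2$ homogeneous polynomials of degree $n$ with no common zero, which is \eqref{trivialisation}. The degree of $U$ as a map $S^3\to S^3$ is then $n^2$ (it is the degree of $R$ squared, by the Hopf-fibration/Gysin argument, or directly since $U$ is $\mathbb{Z}_{2n-2}$-equivariant over a degree-$n$ base map). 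The conditions $a_0,b_0,a_n,b_n\neq 0$ are the non-degeneracy at the two poles and can be arranged by a $U(1)$ gauge transformation \eqref{abeliangauge} (rotating so neither pole is a ramification-type coincidence); I would note this without belabouring it.

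**Explicit formulae.** For (iv), I would simply unwind \eqref{UphiA} with $U$ in the form \eqref{trivialisation}. Using $\sigma=2i(z_1dz_2-z_2dz_1)$ one computes $U^*\sigma=2i(u_1du_2-u_2du_1)$; the $\sqrt{|P_1|^2+|P_2|^2}$ normalisation factors contribute a term proportional to $u_1u_2-u_2u_1=0$ under the antisymmetrisation, so $U^*\sigma=\dfrac{2i}{|P_1|^2+|P_2|^2}(P_1dP_2-P_2dP_1)$, and since $dP_a=z_1\partial_1P_a\,dz_1+z_2\partial_2P_a\,dz_2$ does not immediately give a multiple of $\sigma$, I use Euler's relation $z_1\partial_1P_a+z_2\partial_2P_a=nP_a$ to rewrite $P_1dP_2-P_2dP_1$ as a multiple of $(z_1dz_2-z_2dz_1)$ with coefficient $\dfrac{P_1\partial_2P_2-P_2\partial_2P_1}{z_1}$ (using that the $dz_1$ parts must combine by homogeneity), giving $\Phi=\dfrac{P_1\partial_2P_2-P_2\partial_2P_1}{z_1(|P_1|^2+|P_2|^2)}$. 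For $A$, from $A=U^*\sigma_3-\sigma_3$ and $\sigma_3=2i(\bar z_1dz_1+\bar z_2dz_2)$ one gets $U^*\sigma_3=2i(\bar u_1du_1+\bar u_2du_2)$; separating the log-derivative of the normalisation from the holomorphic part, $\bar u_1du_1+\bar u_2du_2=\tfrac12 d\ln(|P_1|^2+|P_2|^2)+(\text{term along }\sigma_3\text{-direction})$, and re-expressing $d\ln(|P_1|^2+|P_2|^2)$ in the $\sigma,\bar\sigma,\sigma_3$ coframe via the pairings \eqref{pairing} (so that the $\sigma$-coefficient is $\tfrac{i}{2}X_-\ln(\cdots)$ etc.) and collecting the leftover $\sigma_3$-piece into the $(n-1)\sigma_3$ term yields \eqref{aformula}.

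**Main obstacle.** The genuinely delicate point is step (iii)/(ii): arguing rigorously that holomorphicity plus the $X_3$-eigenvalue condition plus global smoothness of $U$ on all of $S^3$ force $u_a$ to be polynomials-over-norm of exactly degree $n$ with no common zero — in particular ruling out lower-degree or meromorphic alternatives and nailing down the degree-$n^2$ count. The flatness computation (i) and the coordinate formulae (iv) are routine once the algebra is organised; the structural classification of $U$ is where the real content lies, and I would lean on the commutative diagram \eqref{equicom} and the identification of $\pi\circ U$ with a rational map $R:\mathbb{C}P^1\to\mathbb{C}P^1$ to make it precise.
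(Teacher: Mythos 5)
Your proposal follows essentially the same route as the paper's proof: flatness of $\A$ in the form \eqref{abnonab} is shown by direct expansion in the $t_3,t_\pm$ basis to be equivalent to the two vortex equations; the trivialisation $\A=U^{-1}dU$ exists because $S^3$ is simply connected; the normalisation forces $X_3U=nUt_3$, so $U$ is a bundle map of Hopf number $n^2$ covering a map $R$ which the condition $U^*\sigma\propto\sigma$, read through the diagram \eqref{equicom}, forces to be holomorphic and hence a rational map of degree $n$; and the explicit formulae for $\Phi$ and $A$ follow from Euler's relation exactly as you sketch. The paper additionally observes that the equivariance condition \eqref{equiva} is automatic for a flat normalised potential (via ${\mathcal L}_{X_3}\A=D_\A\A(X_3)$), whereas you recover it only through the explicit polynomial form; that is a difference of bookkeeping, not of substance.

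One step as written would fail: you cannot arrange $a_0,b_0,a_n,b_n\neq 0$ by the $U(1)$ gauge transformation \eqref{abeliangauge}. That transformation acts as $U\mapsto Ue^{\alpha t_3}$, which multiplies $P_1$ and $P_2$ by a common phase and therefore leaves the zero pattern of their coefficients untouched. The paper instead left-multiplies $U$ by a constant $V\in SU(2)$: this leaves $\A=U^{-1}dU$, and hence the vortex configuration, invariant, while replacing $(P_1,P_2)$ by a constant unitary combination --- equivalently composing $R$ with a M\"obius transformation --- which is what allows one to move the zeros and poles of $R$ away from $0$ and $\infty$. The fix is one line, but the mechanism is genuinely different from the residual right $U(1)$ gauge freedom, which is reserved for the statement that $U$ is determined only up to \eqref{abeliangauge}.
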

Our condition on  $a_0,b_0,a_n,b_n$    will turn out to be convenient in the discussion of Popov vortices in Sect.~\ref{singsect} and facilitates comparison with the treatment in \cite{Manton1}. 

\begin{proof} Suppose $(\Phi,A)$ is a vortex configuration of degree $2n-2$. It is easy to check that, for a gauge potential of the form  \eqref{abnonab}, the normalisation \eqref{normalise} implies \eqref{nonabnorm}. 
 The flatness condition $d\A + \A\wedge \A=0$ for a gauge potential of the form \eqref{abnonab} is equivalent to
\bee
d(\Phi \sigma) + i(A+\sigma_3) \wedge \Phi \sigma =0,   \qquad dA = \frac{i}{2}(|\Phi|^2 -1)\bar{\sigma}\wedge \sigma, 
\eee 
which, using \eqref{usefulsigma},  is equivalent to the vortex equations  \eqref{vortexs3}.
The  equivariance condition \eqref{equiva} for vortex configurations is equivalent to 
\bee
{\mathcal L}_{X_3} \A = n[\A,t_3],
\eee
but this holds automatically for a flat gauge potential satisfying the  normalisation \eqref{nonabnorm} since, for a flat gauge field, 
\bee
{\mathcal L}_{X_3} \A =  D_\A \A(X_3).
\eee

A flat  and smooth $SU(2)$ gauge potential $\A$ on $S^3$    can always be globally trivialised in terms of a function $U:S^3 \rightarrow SU(2)$ as   $\A=U^{-1}dU$.  We now show  that  the vortex form \eqref{abnonab} and the normalisation \eqref{nonabnorm} forces the trivialising map to be  a bundle map covering a rational map of degree $n$. 
The  normalisation \eqref{nonabnorm} requires
\bee
\label{Udegree}
X_3 U =nUt_3, \quad \text{or} \quad U(he^{\gamma t_3}) = U(h)e^{n\gamma t_3}, \quad \gamma \in [0,4\pi).
\eee
This equivariance condition has important topological consequences. It implies that the map $\pi \circ U$ is constant along fibres of the Hopf fibration and    determines a map $S^2 \rightarrow S^2$; in the parametrisation of $U$ in terms of two  functions $P_1,P_2$ which do not vanish simultaneously as in \eqref{trivialisation} (but without assuming that $P_1,P_2$ are polynomials) this map is simply the quotient $P_2/P_1$. In terms of our stereographic coordinate $z$ for $S^2$ and the section $s$ in \eqref{locsec}  we define  
\bee
R = \pi \circ U \circ s,
\eee
and have the following   commutative  diagram (where we have not carefully distinguished between $S^2$ and our coordinate chart $\C$  for it):
\bee
\label{URcom}
\begin{CD}
S^3  @>U>> S^3 \\
@VV \pi V   @VV \pi V \\
 S^2 @> R>> S^2.
\end{CD}
\eee
By virtue of \eqref{Udegree}, the map $\pi \circ U:S^3\rightarrow S^2$ has Hopf number $n^2$: the pre-image of any point on $S^2$  is an  $n$-fold cover of a circle which links with each of the $n$ circles in the pre-image of another point exactly once. It follows that   the map $U$ has degree $n^2$ and the map $R$ covered by $U$ has degree $n$.

Continuing in a  parametrisation of $U$ in terms of two  functions $P_1,P_2$ but still not  assuming that $P_1,P_2$ are polynomials, the condition \eqref{Udegree} implies 
\bee
\label{polydeg}
2i X_3P_1 = nP_1, \quad  2i X_3P_2 = nP_2.
\eee
Since 
\bee
U^{-1} dU = U^*\sigma_3 t_3 +\frac 1 2( U^*\sigma t_-+ U^*\bar{\sigma} t_+),
\eee
we obtain a potential in the vortex gauge \eqref{abnonab} if and only if
\bee
 U^*\sigma  = \Phi \sigma
\eee
for some function $\Phi:S^3 \rightarrow \C$. Using \eqref{pairing}, we therefore need to show that 
\bee
(U^{*}\sigma)(X_{3})=0, \qquad( U^{*}\sigma)(X_{+}) =0.
\eee
The first of these  follows from   \eqref{polydeg}, since 
\bee
\label{X3cond}
(U^{*}\sigma)(X_{3})= \frac{2i}{|P_1|^2+ |P_2|^2}(P_1X_3P_2-P_2X_3P_1).
\eee
To analyse the  second condition, note that  
\bee
(U^{*}\sigma)(X_{+})=\frac{2i}{|P_1|^2+ |P_2|^2}(P_{1}X_{+}P_{2}-P_{2}X_{+}P_{1})=\frac{2i}{|P_1|^2+ |P_2|^2}P_{1}^{2}X_{+}\left(\frac{P_{2}}{P_{1}}\right),
\eee
with the last equality holding where $P_1\neq 0$.  As noted above,  the ratio $P_2/P_1$   defines a function (section of the trivial bundle $H^0$)  on $S^2$. According to the commutative diagram \eqref{equicom}, $X_+(P_2/P_1)=0$ means that the pull-back   $R=s^*(P_2/P_1)$  is, in fact,  a  holomorphic function  where it is defined. Thus $R$ has to be a holomorphic map $S^2\rightarrow S^2$ of degree $n$, which means it must be a rational map, as claimed. 

These conditions are clearly satisfied  when $P_1$ and $P_2$ are  the homogeneous polynomials given in \eqref{trivialisation}. In that case, the rational map is explicitly given by 
\bee
\label{Rfirst}
R(z)=\frac{p_2(z)}{p_1(z)}, 
\eee
where 
\bee
p_1(z)= a_0+a_1z+\ldots +a_nz^n, \quad p_2(z)= b_0+b_1z+\ldots +b_nz^n.
\eee
In order for \eqref{Rfirst} to be a  map of degree $n$ we require at least one of $a_n,b_n$ to be non-zero (so that the maximum of the degrees of $p_1$ and $p_2$ is $n$)  and at least one of $a_0, b_0$ to be non-zero (so that we cannot reduce the degree by cancellation). We can then arrange for all of  $a_0, b_0,a_n,b_n$ to be non-zero by left-multiplying  $U$ with a constant $SU(2)$ matrix if necessary; this does not affect $\A$ and therefore leaves the vortex configuration unchanged.

 Fixing  $U$ to be the trivialisation in terms of the   polynomials  $P_1,P_2$ in \eqref{trivialisation},   we can define a new trivialisation
\bee
\tilde U = U e^{\alpha t_3}, \quad \alpha \in C^\infty(S^3),\qquad  X_3 \alpha =0.
\eee
This also satisfies \eqref{Udegree}, and leads to the same rational map $R$. The  non-abelian gauge potential $\tilde \A= \tilde U^{-1} d\tilde U$   differs  from $\A = U^{-1} dU$ by the  gauge transformation \eqref{abeliangauge}, as claimed.

Continuing with     $P_1$ and $P_2$ being  homogeneous polynomials in $z_1,z_2$ of degree $n$,
we obtain the claimed formula for the vortex field $\Phi$  from
\bee
\Phi =\frac{1}{2} (U^{*}\sigma)(X_{-}) = \frac{P_1\partial_2 P_2 - P_2 \partial_2 P_1}   {z_1(|P_1|^2  + |P_2|^2)},
\eee
noting that 
\bee
\frac{P_1\partial_2 P_2 - P_2 \partial_2 P_1}   {z_1}
\eee
is a homogeneous polynomial in $z_1,z_2$ of degree $2n-2$ and non-singular: the term of order $z_2^{2n-1}$, which could potentially cause a singularity when divided by $z_1$, vanishes.

The derivation of the expression for $A$ is a straightforward calculation, which makes  use of 
\bee
(z_1\partial_1 + z_2\partial_2) (|P_1|^2+|P_2|^2) = n(|P_1|^2+|P_2|^2). 
\eee 
One finds
\bee
U^*\sigma_3(X_3) =n, \;\; U^*\sigma_3(X_+)=-iX_{+}\ln (|P_1|^2 +|P_2|^2)  , \;\;  U^*\sigma_3(X_-)= iX_{-}\ln  (|P_1|^2+|P_2|^2), 
\eee
which, with \eqref{pairing},  implies  \eqref{aformula}.
\end{proof}

In order to make contact with discussions in the literature related to  the potential $A$  we note an  expression  for $A$ in terms of polar coordinates, for later use.
\begin{lem}
Suppose $(\Phi,A)$ is a vortex configuration of vortex number  $2n-2$ on $S^3$ and consider  
the  modulus-argument parametrisation  
 \bee
 \label{Phimodarg}
\Phi = e^{\frac{M}{2}+i\chi},
\eee
valid away from the  (generically $2n-2$) zeros of $\Phi$. Then the gauge potential $A$ in  \eqref{UphiA}  can be
expressed via the formula
\bee
\label{Amodarg}
A= -\frac{\ell}{4}  \star (\sigma_3 \wedge  dM) -d\chi,
\eee
valid away from the zeros of $\Phi$.
\end{lem}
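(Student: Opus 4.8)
The plan is to read the formula \eqref{Amodarg} off directly from the first of the vortex equations \eqref{vortexs3}, the modulus–argument form \eqref{Phimodarg} being tailor-made for this. Substituting $\Phi = e^{M/2 + i\chi}$ one has $d\Phi + iA\Phi = \Phi\bigl(\tfrac12 dM + i(d\chi + A)\bigr)$, so away from the zeros of $\Phi$ (where $\Phi\neq 0$) the equation $(d\Phi + iA\Phi)\wedge\sigma = 0$ is equivalent to $\bigl(\tfrac12 dM + i(d\chi + A)\bigr)\wedge\sigma = 0$.

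First I would observe that, since $\{\sigma,\bar\sigma,\sigma_3\}$ is a pointwise basis of the complexified cotangent space and the $2$-forms $\bar\sigma\wedge\sigma$ and $\sigma_3\wedge\sigma$ are linearly independent, a complex $1$-form $\omega$ on $S^3$ satisfies $\omega\wedge\sigma = 0$ precisely when $\omega = \mu\sigma$ for some complex function $\mu$. Applying this with $\omega = \tfrac12 dM + i(d\chi + A)$, writing $\mu = u+iv$ with $u,v$ real and separating real and imaginary parts (all of $M$, $\chi$, $A$ being real), I get $\tfrac12 dM = u\sigma_1 - v\sigma_2$ and $d\chi + A = v\sigma_1 + u\sigma_2$. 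The first relation shows that $dM$ has vanishing $\sigma_3$-component and fixes $u,v$ in terms of the $\sigma_1$- and $\sigma_2$-components of $dM$; feeding this back into the second relation expresses $d\chi + A$ purely through $dM$, namely as $-\tfrac12 m_2\sigma_1 + \tfrac12 m_1\sigma_2$ when $dM = m_1\sigma_1 + m_2\sigma_2 + m_3\sigma_3$.

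It then remains to recognise this right-hand side as $-\tfrac{\ell}{4}\star(\sigma_3\wedge dM)$. For this I would compute the Hodge star of the round $3$-sphere of radius $\ell$ in the orthonormal coframe $e^i = \tfrac{\ell}{2}\sigma_i$ adapted to \eqref{S3met}, keeping careful track of the orientation \eqref{orient}, which in this coframe reads $\Omega = e^2\wedge e^1\wedge e^3 = -\,e^1\wedge e^2\wedge e^3$. This gives $\star(\sigma_3\wedge\sigma_1) = -\tfrac{2}{\ell}\sigma_2$ and $\star(\sigma_3\wedge\sigma_2) = \tfrac{2}{\ell}\sigma_1$, hence $-\tfrac{\ell}{4}\star(\sigma_3\wedge dM) = -\tfrac12 m_2\sigma_1 + \tfrac12 m_1\sigma_2 = d\chi + A$, and rearranging is exactly \eqref{Amodarg}.

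The argument is short and self-contained: it uses only the first vortex equation, not the normalisation, the equivariance, or the second vortex equation. The one place that needs care — and essentially the only obstacle — is the Hodge-star step: one must retain the radius-$\ell$ normalisation of the metric \eqref{S3met} and the reversed orientation implicit in \eqref{orient}, since both feed into the precise constant $\ell/4$ and the signs in \eqref{Amodarg}. As the statement already flags, the parametrisation \eqref{Phimodarg}, and hence the identity \eqref{Amodarg}, are valid only on the complement of the zero set of $\Phi$.
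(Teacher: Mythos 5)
Your argument is correct, and it reaches \eqref{Amodarg} by a slightly different route than the paper. The paper starts from the explicit formula \eqref{aformula} for $A$ (a by-product of the polynomial trivialisation in Theorem~\ref{flatvor}), rewrites it away from the zeros of $\Phi$ as $A=(n-1)\sigma_3-\tfrac{i}{2}X_-\ln\bar\Phi\,\sigma+\tfrac{i}{2}X_+\ln\Phi\,\bar\sigma$, and then works in the complexified coframe $(\sigma,\bar\sigma,\sigma_3)$ using $df=\tfrac12 X_-f\,\sigma+\tfrac12 X_+f\,\bar\sigma+X_3f\,\sigma_3$; it needs the equivariance condition \eqref{equiva} separately to fix $X_3\chi=1-n$ so that the $(n-1)\sigma_3$ term is absorbed into $-d\chi$. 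You instead read everything off the first vortex equation via the pointwise characterisation $\omega\wedge\sigma=0\iff\omega=\mu\sigma$ and a real decomposition $\mu=u+iv$; since the full $2$-form equation $(d\Phi+iA\Phi)\wedge\sigma=0$ already encodes the $X_3$-component (the paper notes this component is equivalent to normalisation plus equivariance), you genuinely do not need to invoke those conditions separately, and you also bypass Theorem~\ref{flatvor} entirely, so your proof applies to any vortex configuration without reference to the polynomials $P_1,P_2$. The computational core --- the Hodge star of $\sigma_3\wedge\sigma_i$ in the radius-$\ell$ metric with the orientation \eqref{orient}, giving $\star(\sigma_3\wedge\sigma_1)=-\tfrac{2}{\ell}\sigma_2$ and $\star(\sigma_3\wedge\sigma_2)=\tfrac{2}{\ell}\sigma_1$, equivalently the paper's $\star(\sigma_3\wedge\sigma)=\tfrac{2i}{\ell}\sigma$ --- is the same in both, and your signs and constants check out. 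What your version buys is economy of hypotheses and independence from the explicit trivialisation; what the paper's version buys is continuity with the formula \eqref{aformula} that it has already established and will reuse.
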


\begin{proof} 
Observe that, away from the zeros of $\Phi$, we can write  \eqref{aformula} as 
\bee
\label{aformula_phi}
A= (n-1)\sigma_3 - \frac{i}{2}X_-\ln \bar{\Phi}\sigma + \frac{i}{2}X_+\ln \Phi\bar{\sigma}.
\eee
Inserting the parametrisation \eqref{Phimodarg} leads to
\bee
\label{firstform}
 A -(n-1)\sigma_3=  \left(- \frac {i}{4} X_-M  - \frac 12 X_-\chi \right) \sigma +
\left(\frac {i}{4} X_+ M  -\frac 12 X_+\chi \right)\bar{\sigma}.
\eee
With the Hodge-$\star$ relative to the orientation  \eqref{orient}, we have 
\bee
\star (\sigma_3 \wedge \sigma) = i\frac{2}{\ell} \sigma, \quad \star (\sigma_3 \wedge\bar  \sigma) = -i\frac{2}{\ell} \bar \sigma,
\eee
so that 
\bee
\label{Mresult}
- \frac {i}{4} X_-M \sigma + \frac {i}{4} X_+ M\bar{\sigma}=-\frac{\ell}{4}  \star (\sigma_3 \wedge  dM),
\eee
where we have used that for any differentiable $f:S^3\rightarrow \C$,
\bee
df =\frac 12  X_-f \sigma + \frac 12 X_+f \bar{\sigma}  +  X_{3}f\sigma_{3}, \label{derivdecomp}
\eee
Turning to the terms involving $\chi$, using \eqref{derivdecomp} and deducing from \eqref{equiva} that $X_3\chi = 1-n$,  we conclude that
\bee
\label{chiresult}
d\chi =\frac 12  X_-\chi \sigma + \frac 12 X_+ \chi \bar{\sigma}  -  (n-1)\sigma_3.
\eee
Combining \eqref{firstform} with  \eqref{Mresult} and \eqref{chiresult}  we arrive at the claimed expression for the gauge potential \eqref{UphiA} in terms of the modulus and argument of the field $\Phi$. 
\end{proof}

\subsection{Magnetic zero-modes from vortices}
\label{magfields}
 We are now ready to explain how one can construct magnetic zero-modes of the Dirac operator on the 3-sphere and on Euclidean 3-space  from vortex configurations on the 3-sphere. 
 We define spinorial  vortex zero-modes  as follows.
\begin{dfn}
\label{spinorvortex} A pair $(\Psi,A)$ of a spinor $\Psi$ and a 1-form $A$ on $S^3$ is said to be a  vortex zero-mode of the Dirac equation on $S^3$ if 
\begin{align}
\label{spinvoreq}
\Dslash_{S^3,A}\Psi =0 , \qquad F_A= \frac{4i}{\ell}\star \Psi^\dagger h^{-1}dh\Psi + \frac 1 4 \sigma_1\wedge \sigma_2,
\end{align}
where $\star$ is the Hodge star operator on $S^3$ with respect to the metric \eqref{S3met} and orientation \eqref{orient}. 
\end{dfn}

\begin{thm}
\label{dirvor}
Suppose $(\Phi,A)$ is a vortex configuration on $S^3$. Then the pair 
\bee
\Psi = \begin{pmatrix} \Phi \\ 0  \end{pmatrix}, \qquad A'= A+\frac 3 4 \sigma_3,
\eee
is a  vortex  zero-mode   $(\Psi,A')$  on $S^3$.
\end{thm}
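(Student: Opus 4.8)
The plan is to verify the two conditions in Definition \ref{spinorvortex} directly for the pair $(\Psi,A')$ with $\Psi=(\Phi,0)^t$ and $A'=A+\tfrac34\sigma_3$. For the Dirac equation, I would use the explicit form \eqref{S3Dirac} of $\Dslash_{S^3,A'}$. Acting on a spinor whose lower component vanishes, the only nontrivial rows are the top row, which gives $(X_3+iA'_3)\Phi$, and the bottom row, which gives $(X_++iA'_+)\Phi$, together with the constant shift $\tfrac{3}{2\ell}$. The normalisation \eqref{normalise} and equivariance \eqref{equiva} give $X_3\Phi = -iA(X_3)\Phi = -i(n-1)\Phi$, so with $A'(X_3)=A(X_3)+\tfrac34 = n-1+\tfrac34$ one gets $(X_3+iA'_3)\Phi = \tfrac{3i}{4}\Phi$; multiplying by $\tfrac{2i}{\ell}$ and adding $\tfrac{3}{2\ell}\Phi$ cancels. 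For the bottom component I would invoke equation \eqref{useforDirac}, namely $X_+\Phi + iA(X_+)\Phi=0$, which was derived from the first vortex equation by contracting with $(X_+,X_-)$; since $\sigma_3(X_+)=0$ we have $A'(X_+)=A(X_+)$, so the bottom component vanishes identically. Hence $\Dslash_{S^3,A'}\Psi=0$.

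For the curvature condition I would compute both sides of $F_{A'} = \tfrac{4i}{\ell}\star\Psi^\dagger h^{-1}dh\,\Psi + \tfrac14\sigma_1\wedge\sigma_2$. On the left, $F_{A'}=dA'=dA+\tfrac34 d\sigma_3 = F_A + \tfrac{3}{8}\bar\sigma\wedge\sigma$ using \eqref{usefulsigma}; combined with the second vortex equation in \eqref{vortexs3} this gives $F_{A'} = \tfrac{i}{2}(|\Phi|^2-1)\bar\sigma\wedge\sigma + \tfrac38\bar\sigma\wedge\sigma$. On the right, since $\Psi=(\Phi,0)^t$ one has $\Psi^\dagger h^{-1}dh\,\Psi = |\Phi|^2\,(h^{-1}dh)_{11}$; from $h^{-1}dh=\sigma_1t_1+\sigma_2t_2+\sigma_3t_3$ and $t_j=-\tfrac{i}{2}\tau_j$ the $(1,1)$ entry is $-\tfrac{i}{2}\sigma_3$, so $\Psi^\dagger h^{-1}dh\,\Psi = -\tfrac{i}{2}|\Phi|^2\sigma_3$. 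I would then apply the Hodge star, using the orientation \eqref{orient} and the fact that on the radius-$\ell$ sphere $\star\sigma_3$ is a multiple of $\sigma_1\wedge\sigma_2$ (equivalently of $\tfrac{i}{2}\bar\sigma\wedge\sigma$); the relations already recorded, e.g. $\star(\sigma_3\wedge\sigma)=\tfrac{2i}{\ell}\sigma$, fix the constant. Matching the $|\Phi|^2$ terms and the constant terms on the two sides should reproduce exactly $\tfrac14\sigma_1\wedge\sigma_2 = \tfrac{i}{8}\bar\sigma\wedge\sigma$ as the leftover, confirming the identity.

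The main obstacle I anticipate is purely bookkeeping: getting all the factors of $\ell$, $i$, $2$ and the sign from the orientation convention \eqref{orient} to line up in the Hodge-star computation of $\star\sigma_3$ and in relating $\sigma_1\wedge\sigma_2$ to $\tfrac{i}{2}\bar\sigma\wedge\sigma$. In particular one must be careful that $\star$ here is taken on the radius-$\ell$ three-sphere, so $\star$ of a 1-form is $\tfrac{\ell}{2}$ times the naive expression, and that $\Psi^\dagger h^{-1}dh\,\Psi$ is a $\C$-valued 1-form, not $su(2)$-valued, so no extra trace or normalisation of generators intervenes. There is also a minor point at the zeros of $\Phi$, where the modulus-argument formulae break down, but since we are working with the manifestly smooth $\Phi$ coming from Theorem \ref{flatvor} and all the identities used ($X_+\Phi+iA(X_+)\Phi=0$, the vortex equations, $d\sigma_3=\tfrac{i}{2}\bar\sigma\wedge\sigma$) are smooth equations on all of $S^3$, no separate argument at the zeros is needed. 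Once the constants are pinned down the theorem follows.
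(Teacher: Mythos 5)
Your proposal is correct and follows essentially the same route as the paper: the Dirac equation is checked component-wise using the normalisation, equivariance and \eqref{useforDirac}, and the non-linear equation by computing $F_{A'}$ and $\star\Psi^\dagger h^{-1}dh\,\Psi$ directly from \eqref{usefulsigma} and the orientation \eqref{orient}. The only blemishes are dropped factors of $i$ in two intermediate displays ($\tfrac34 d\sigma_3=\tfrac{3i}{8}\bar\sigma\wedge\sigma$, and $\tfrac14\sigma_1\wedge\sigma_2=-\tfrac{i}{8}\bar\sigma\wedge\sigma$ since $\bar\sigma\wedge\sigma=2i\,\sigma_1\wedge\sigma_2$), which disappear once the bookkeeping you flag is carried out.
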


\begin{proof}
The spinor given in the theorem is a zero-mode of the gauged Dirac equation if 
\bee
\left(iX_3-A'_3 +\frac 3 4 \right)\Phi = 0 \quad \text{and} \quad X_+\Phi + iA'_+\Phi =0.
\eee
However, $A'_3= A'(X_3)= (n-1)+ \frac 3 4$ so that the first of these equations follows from \eqref{equiva}. The second follows from $A'(X_+) = A(X_+)$ and \eqref{useforDirac}. 
Turning to the non-linear equation, we note that, 
for a spinor of the form given in the theorem, 
\bee
 \frac{4i}{\ell}\star \Psi^\dagger h^{-1}dh\Psi =  \frac{4i}{\ell}|\Phi|^2 \star \left(-\frac{i}{2}\sigma_3\right)
=  |\Phi|^2\sigma_2\wedge \sigma_1.
\eee 
Moreover,
\bee
F_{A'} = F_A + \frac 3 4 \sigma_2\wedge \sigma_1 = \left(|\Phi|^2 - \frac 1 4 \right)\sigma_2\wedge \sigma_1,
\eee
so that the non-linear equation in the definition of a vortex zero-mode follows.
\end{proof}

We can pull back the vortex  zero-modes of the Dirac equation on $S^3$ to $\R^3$ using Lemma \eqref{conformal}, but  we also need to understand how the non-linear equation behaves under this pull-back. It turns out that the resulting equations take their simplest form in vector notation for gauge potentials and their magnetic fields, i.e., when expanding a 1-form on $\R^3$ as $A=\vec{A}\cdot d\vec{x}$ and defining the magnetic field  vector field via $dA=\frac 12 \epsilon_{jkl} B_j dx_k \wedge dx_l$ or $\vec{B}=\nabla \times \vec{A}$. 

The  magnetic field  corresponding to  the inhomogeneous term is given by
\bee
\label{backfield}
\frac{1}{4}H^*(\sigma_1\wedge \sigma_2)= \frac{4\ell^2}{(\ell^2+r^2)^2}
 \star_{\R^3}\theta_3= 
\frac 12 \epsilon_{jkl} b_{j} dx_k\wedge dx_l, \quad 
\vec{b}=\frac{4\ell^{2}}{(\ell^{2}+r^{2})^{3}}\begin{pmatrix}
2(x_1x_3-\ell x_2)\\ 2(x_2x_3+\ell x_1)\\ \ell^{2}-r^{2}+2x_3^{2}
\end{pmatrix},
\eee
where we used \eqref{theta3exp}. The 
integral lines  of $\vec{b}$  are the fibres of the Hopf fibration \eqref{Hopf}; they are  plotted in  Fig.~\ref{background}.

\begin{figure}[!htbp]
  \centering
\includegraphics[width=7truecm]{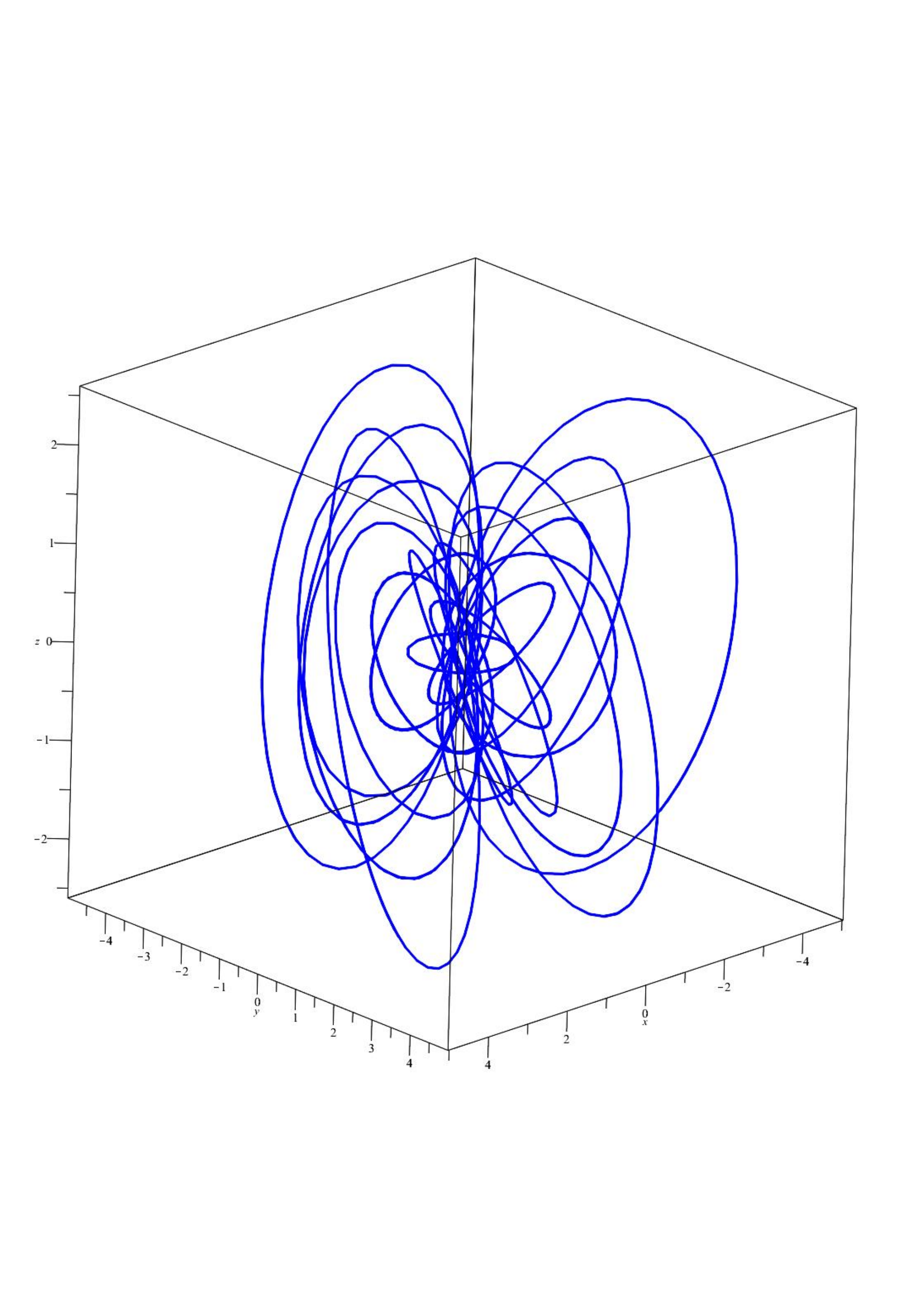}
 \caption{A plot of some of the integral curves of the background field $\vec{b}$  given in \eqref{backfield}}
 \label{background}
\end{figure}

We claim that vortex zero-modes of the Dirac equation pull back to solutions of the following coupled equations in $\R^3$:
\bee
\label{coupledr3}
\Dslash_{\R^3,A}\Psi =0,  \qquad \vec{B} = -\Psi^\dagger \vec{\tau}\Psi + \vec{b}.
\eee
We state this result as follows.
\begin{cor}
Any pair of homogeneous polynomials $P_1,P_2:\C^2 \rightarrow \C$ of the same degree and  without common zeros uniquely determines a smooth and square-integrable  magnetic zero-mode    of the Dirac operator in Euclidean 3-space which satisfies the coupled equations \eqref{coupledr3}. 
\end{cor}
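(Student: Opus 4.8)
The plan is to assemble the corollary from the machinery already built in the excerpt, following the chain of constructions: polynomials $\Rightarrow$ bundle map $\Rightarrow$ vortex configuration on $S^3$ $\Rightarrow$ spinorial vortex zero-mode on $S^3$ $\Rightarrow$ magnetic zero-mode on $\R^3$, then separately verifying smoothness and square-integrability. First I would invoke Theorem \ref{flatvor}: given homogeneous polynomials $P_1,P_2$ of common degree $n$ with no common zeros (after left-multiplying by a constant $SU(2)$ matrix to arrange $a_0,b_0,a_n,b_n\neq 0$, which changes nothing gauge-invariant), the map $U$ in \eqref{trivialisation} is a bundle map of the Hopf fibration and $\A=U^{-1}dU$ yields, via \eqref{UphiA}, a vortex configuration $(\Phi,A)$ of degree $2n-2$ on $S^3$, with $\Phi$ and $A$ given by the explicit smooth formulae in the theorem. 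Then Theorem \ref{dirvor} gives the spinorial vortex zero-mode $(\Psi,A')$ on $S^3$ with $\Psi=(\Phi,0)^t$ and $A'=A+\tfrac34\sigma_3$, satisfying \eqref{spinvoreq}.

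Next I would apply Lemma \ref{conformal} to push the pair to $\R^3$: $\Psi_H=G\Omega^{-1}H^*\Psi$ is a zero-mode of $\Dslash_{\R^3,H^*A'}$, which is the first equation of \eqref{coupledr3} with $A=H^*A'$. The remaining task is to show the nonlinear equation of \eqref{spinvoreq} transforms into the second equation of \eqref{coupledr3} under pull-back. For this I would take the Hodge-dual form of \eqref{spinvoreq} — the magnetic-field form $\vec B=\tfrac{4i}{\ell}(\text{dual of }\Psi^\dagger h^{-1}dh\,\Psi)+\tfrac14(\text{dual of }\sigma_1\wedge\sigma_2)$ — and pull it back. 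The inhomogeneous term is handled by \eqref{backfield}, which already identifies $\tfrac14 H^*(\sigma_1\wedge\sigma_2)$ with the magnetic two-form of $\vec b$. For the matter term I would use the frame-rotation relation from Lemma \ref{conformal}, namely $H^{-1}dH$ pulled back is rotated by $G$ and scaled by $\Omega$, together with $\Psi_H^\dagger\vec\tau\Psi_H = \Omega^{-2}(H^*\Psi)^\dagger G^{-1}\vec\tau G (H^*\Psi)$ and the conformal weight of the Hodge star in three dimensions, to check that $\tfrac{4i}{\ell}\star_{S^3}\Psi^\dagger h^{-1}dh\,\Psi$ pulls back (as a magnetic field vector) precisely to $-\Psi_H^\dagger\vec\tau\Psi_H$. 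This bookkeeping of conformal weights and the $G$-conjugation is the one genuinely computational point, but it is the same type of calculation already carried out in the proof of Lemma \ref{conformal}.

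For smoothness: $P_1,P_2$ have no common zeros on $S^3\subset\C^2$, so $U$ in \eqref{trivialisation} is a smooth map $S^3\to SU(2)$, hence $\A=U^{-1}dU$ is smooth, hence $\Phi$ and $A$ are smooth on $S^3$; the apparent $z_1$ in denominators cancels, as noted in the proof of Theorem \ref{flatvor} (the potentially singular top-degree term vanishes). Pulling back along the diffeomorphism $H:\R^3\to S^3\setminus\{\text{south pole}\}$ and multiplying by the smooth nowhere-zero factor $G\Omega^{-1}$ keeps everything smooth on all of $\R^3$. For square-integrability: $\int_{\R^3}|\Psi_H|^2\,d^3x$ equals, by the change of variables and the conformal weight in $\Psi_H=G\Omega^{-1}H^*\Psi$ (with $G$ unitary and $d^3x = \Omega^3\,d\mathrm{vol}_{S^3}$ up to constants, as encoded in \eqref{Omdef}), a finite multiple of $\int_{S^3}|\Phi|^2\,d\mathrm{vol}_{S^3}$, which is finite since $\Phi$ is smooth on the compact $S^3$; similarly the magnetic energy $\int_{\R^3}|\vec B|^2\,d^3x$ is conformally related to $\int_{S^3}|F_{A'}|^2$ plus cross terms with the smooth $\vec b$ (whose own energy is finite by inspection of \eqref{backfield}), all finite. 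Finally, uniqueness: the map from $(P_1,P_2)$ to $(\Psi_H,A)$ is well-defined because each construction step is, so "uniquely determines" just means the output is canonically fixed by the input. The main obstacle I anticipate is the conformal-weight bookkeeping in the nonlinear equation — matching the $1/\ell$ factor, the Hodge-star weight, and the $G$-conjugation so that the right-hand side lands exactly as $-\Psi^\dagger\vec\tau\Psi+\vec b$ with no stray constants — but this is routine given the explicit relations \eqref{rotatingsigma}, \eqref{backfield}, and the computation inside the proof of Lemma \ref{conformal}.
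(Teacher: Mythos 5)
Your overall route is exactly the paper's: Theorem~\ref{flatvor} to get $(\Phi,A)$ from $P_1,P_2$, Theorem~\ref{dirvor} to get the spinorial vortex zero-mode $(\Psi,A')$ on $S^3$, Lemma~\ref{conformal} to transport the linear equation to $\R^3$, and then the pull-back of the non-linear equation via \eqref{pullbacksigma}, \eqref{rotatingsigma} and \eqref{backfield} to land on $\vec{B}=-\Psi_H^\dagger\vec\tau\Psi_H+\vec b$; the smoothness argument is also the same. So the structure of the proof is right and matches the paper step for step.

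The one place your argument as written does not close is square-integrability of the spinor. You claim $\int_{\R^3}|\Psi_H|^2\,d^3x$ equals a finite multiple of $\int_{S^3}|\Phi|^2\,d\mathrm{vol}_{S^3}$ and conclude finiteness from smoothness of $\Phi$ on the compact $S^3$. But the zero-mode weight $\Omega^{-1}$ in $\Psi_H=G\Omega^{-1}H^*\Psi$ is the conformal weight $\omega^{(n-1)/2}$ that preserves the kernel of the Dirac operator, not the weight $\omega^{n/2}$ that makes the map an $L^2$-isometry; with your own bookkeeping $d^3x\propto\Omega^3\,d\mathrm{vol}_{S^3}$ one gets $|\Psi_H|^2 d^3x\propto \Omega\,|H^*\Phi|^2\,d\mathrm{vol}_{S^3}$, and the residual weight $\Omega$ is \emph{unbounded} near the south pole, so ``smooth on compact'' is not enough --- you would additionally need that $\Omega$ is integrable there against the round measure (it is, but that must be said). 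The paper avoids this entirely by the pointwise bound $|\Psi_H^\dagger\Psi_H|\le C/\Omega^2$ and direct integration on $\R^3$, which is the cleaner fix. A similar caveat applies to your proposed conformal-energy identity for $\int|\vec B|^2$: the paper instead deduces square-integrability of $\vec{B}'_H$ from the already-established relation \eqref{Bb}, using $|\Psi^\dagger\vec\tau\Psi|=|\Psi^\dagger\Psi|$ and the explicit decay of $\vec b$ in \eqref{backfield}. With the square-integrability step repaired along these lines, your proof is complete and coincides with the paper's.
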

\begin{proof} Combining $P_1$ and $P_2$ into an $SU(2)$ matrix yields a vortex configuration $(\Phi,A)$ on $S^3$ according to the prescription of Theorem \ref{flatvor}. Such a vortex configuration  defines  a vortex zero-mode $\Psi$ of the Dirac operator on $S^3$  coupled to $A'= A+ \frac 34  \sigma_3$  according to  Theorem \ref{dirvor}. Implementing  the conformal change  to $\R^3$ according to Lemma \ref{conformal} produces the  magnetic zero-mode  
\bee
\label{PPsi}
\Psi_H=  \Omega^{-1} G H^*\Psi = \Omega^{-1} G\begin{pmatrix}H^*\Phi \\ 0 
 \end{pmatrix}
 \eee  
 on $\R^3$  of the Dirac operator coupled to $H^*A'=\vec{A'}_H\cdot d\vec{x}$.

We also need to understand the pull-back of the non-linear equation in \eqref{spinvoreq}.
The quadratic term in the zero-mode $\Psi$ on $S^3$ is  
\bee
 \frac{4i}{\ell}\star \Psi^\dagger h^{-1}dh\Psi=  \frac 12 \epsilon_{jkl} \Psi^\dagger\tau_j \Psi \sigma_l\wedge\sigma_k,
\eee
and using  \eqref{pullbacksigma} and \eqref{rotatingsigma} we deduce that pulling back   with $H$ yields
\bee
H^*\left( \frac{4i}{\ell}\star \Psi^\dagger h^{-1}dh\Psi\right)=\frac{1}{2\Omega^2} \epsilon_{jkl} H^*\Psi^\dagger G^{-1}\tau_j G H^*\Psi dx_l\wedge dx_k = -\frac 12 \Psi^\dagger_H\tau_j\Psi_H\epsilon_{jkl} dx_k\wedge dx_l, 
\eee
where $\Psi_H$ is related to $\Psi$ as  defined in \eqref{r3spinor}.
Finally expanding the pull-back of the field strength in the same coordinates 
\bee
H^*F_{A'}=\frac 12 \epsilon_{jkl} (B'_H)_j dx_k \wedge dx_l, \quad \text{so that} \quad \vec{B'}_H =\nabla \times \vec{A'}_H,
\eee
the non-linear equation  in the definition \eqref{spinorvortex} pulls back to 
\bee
\label{Bb}
\vec{B'}_H = - \Psi_H^\dagger \vec{\tau} \Psi_H + \vec{b},
\eee
as claimed.

The spinor $\Psi_H$  in  \eqref{PPsi} and the gauge potential $H^*A'$ are manifestly smooth, being the pull-back with smooth maps of smooth functions on $S^3$. As the pull-back of a smooth function on $S^3$, $H^*\Phi$ is bounded and has a (finite) limit as $r\rightarrow \infty$.
It follows that 
\bee
|\Psi_H^\dagger \Psi_H| \leq \frac{C}{\Omega^2},
\eee
 for some positive constant $C$, which ensures that $\Psi_H$  is square-integrable with respect to the Euclidean  measure \eqref{euclid}.  Since $|\Psi^\dagger \vec{\tau} \Psi|= |\Psi^\dagger \Psi|$ for any spinor $\Psi$, it follows that the vector field  $ \Psi_H^\dagger \vec{\tau} \Psi_H$ is also square-integrable. The square-integrability of $ \vec{B'}_H $ then follows from the square-integrability of  $ \vec{b}$  and the relation \eqref{Bb}.
\end{proof}

The  coupled equations \eqref{coupledr3}  have appeared in the literature in various contexts and  deserve  a few comments.
There are various ways of stating these equations. Re-scaling the spinor by a factor of $\sqrt 2$ leaves the linear equations unchanged, but changes the quadratic term in the non-linear equation into the spin density 
\bee
\label{density}
\vec{\Sigma}  =\frac 12 \Psi^\dagger \vec{\tau} \Psi.
\eee
Changing to the charge-conjugate spinor
\bee
\label{cc}
\Psi^c = i\tau_2\Psi,
\eee
turns our equations into the equivalent set of equations
\bee
\label{coupledr3cc}
\Dslash_{\R^3,-A} \Psi^c =0, \qquad
\vec{B}  =  (\Psi^c)^\dagger\vec{\tau}\Psi^c +\vec{b}.
\eee 
The  equations \eqref{coupledr3} have been discussed in the literature as the dimensionally reduced Freund equations \cite{AMN1}, while their charge-conjugates have appeared as the variational equations of  a particular Dirac-Chern-Simons action \cite{AMN3}.

The magnetic field on $\R^3$ obtained from the pair of complex polynomials $P_1,P_2$ can be written in terms of the  Hopf map $\pi$ and the maps $H$ \eqref{hdef} and $U$ \eqref{trivialisation}  as 
\bee
\label{Ranadafield}
F=(\pi \circ U \circ H)^*\mathcal{R},
\eee
where $\mathcal{R}$ is the area form on the 2-sphere of unit radius \eqref{2darea}, which will play an important role in the next section. This is an example of the magnetic field introduced  by Ra\~{n}ada  in \cite{Ranada1} and discussed more recently  in \cite{IB}.  It has interesting topological properties inherited from those of the map $U:S^3\rightarrow S^3$, which, as explained after diagram \ref{URcom},  has topological degree $n^2$. As also explained there,  $\pi \circ U:S^3\rightarrow S^2$ has Hopf number $n^2$.  As discussed in \cite{Ranada2}, this implies that the magnetic field  \eqref{Ranadafield} has linking number one and  (magnetic) helicity $n^2$.

In order to compare with solutions for \eqref{coupledr3} previously obtained in the literature, we also pull back the modulus-argument expression \eqref{Amodarg} to $\R^3$, to find 
\bee
H^{*}A=-\frac{\ell}{4}H^{*}\left(\star(\sigma_{3}\wedge dM)\right)-d(H^*\chi).
\eee
The operation 
\bee
\star\sigma_3\wedge\, : \Lambda^1(S^3)\rightarrow \Lambda^1(S^3)
\eee
is linear; it annihilates $\sigma_3$ and  acts as a complex structure on the cotangent space orthogonal to $\sigma_3$   by mapping 
\bee
\star \sigma_3\wedge \, :  \sigma  \mapsto (2 i/  \ell) \sigma.
\eee
It pulls back  to the map $(\theta_1+i\theta_2) \mapsto  (2 i/  \ell)( \theta_1+i\theta_2)$, which we can write as 
\bee
-\frac 2 \ell \star_{\R^3}\theta_3\wedge\,: \Lambda^1(\R^3)\rightarrow \Lambda^1(\R^3).
\eee
Therefore 
 \bee
 \label{modarggauge}
H^*A=-\frac{1}{2}\star_{\R^{3}}\left(d (H^*M) \wedge\theta_{3}\right)-d (H^*\chi).
\eee

In \cite{LY}, Loss and Yau  showed that, for spinors on $\R^3$  whose spin density \eqref{density} has vanishing divergence, one can always find a gauge field $A$ so that the given spinor is a zero-mode of the Dirac operator on $\R^3$ coupled to $A$. They gave an explicit formula, valid where the spinor does not vanish:
\bee
\label{LossYauformula}
A_\Psi= -\frac 1 2 \star    \frac{d (\Psi^\dagger d  \vec{x} \cdot \vec{\tau} \Psi)}  {\Psi^\dagger \Psi} - \frac{ \text{Im}(\Psi^\dagger d \Psi)}  {\Psi^\dagger \Psi}.
\eee
The relation to our expression \eqref{modarggauge} is as follows.
\begin{lem}
Let  $(\Phi,A)$ be a vortex configuration on $S^3$, and $\Psi_H$ the corresponding zero-mode of the Dirac operator on $\R^3$  given in \eqref{PPsi}. Then the spin density of $\Psi_H$ is divergenceless, and the corresponding Loss-Yau potential is given by 
\bee
A_{\Psi_H} = H^*A+ \frac 3 4 H^*\sigma_3.
\eee
\end{lem}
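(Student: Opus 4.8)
The plan is to verify the claimed formula by a direct computation, pulling back to $\R^3$ the known relations on $S^3$ and comparing with the Loss--Yau expression \eqref{LossYauformula} evaluated on $\Psi_H$. First I would record that $\Psi_H = \Omega^{-1}G\bigl(\begin{smallmatrix}H^*\Phi\\0\end{smallmatrix}\bigr)$, so that $\Psi_H^\dagger\Psi_H = \Omega^{-2}|H^*\Phi|^2$ (since $G$ is unitary) and the spin-density vector is $\Psi_H^\dagger\vec\tau\,\Psi_H = \Omega^{-2}|H^*\Phi|^2\, G^{-1}\hat e_3 G$ in the adjoint action, where $\hat e_3$ is the third standard basis vector; equivalently, in $1$-form language, $\Psi_H^\dagger d\vec x\cdot\vec\tau\,\Psi_H = \Omega^{-2}|H^*\Phi|^2\, H^*\sigma_3\cdot(\text{frame factor})$, which by \eqref{pullbacksigma} is proportional to $|H^*\Phi|^2\,\Omega^{-3}\theta_3$. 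The divergencelessness of the spin density is then most cleanly obtained not by computing in coordinates but by invoking Theorem~\ref{dirvor} together with Lemma~\ref{conformal}: $\Psi_H$ is a zero-mode of $\Dslash_{\R^3,H^*A'}$, and for any Dirac zero-mode on $\R^3$ the spin density is automatically divergence-free (this is the integrability condition behind the Loss--Yau construction, and is the reason \eqref{LossYauformula} is consistent). So that part of the statement is essentially free.

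Next I would compute the two pieces of $A_{\Psi_H}$ from \eqref{LossYauformula}. For the second (the $\mathrm{Im}(\Psi^\dagger d\Psi)/\Psi^\dagger\Psi$) term, I would use that conjugation by the unitary $\Omega^{-1}G$ contributes a real ``gauge'' piece: writing $\Psi_H = \Omega^{-1}G\,\psi$ with $\psi = H^*\Psi$, one has $\Psi_H^\dagger d\Psi_H = \Omega^{-2}\bigl(\psi^\dagger d\psi + \psi^\dagger(G^{-1}dG)\psi\bigr) + (\text{real})\,\psi^\dagger\psi$, and since $G^{-1}dG$ is $su(2)$-valued (anti-Hermitian) its contraction against $\psi = (H^*\Phi,0)^t$ is $(H^*\sigma_3)$-type and picks out $\tfrac i2 (G^{-1}dG)_{33}$-component. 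Carefully, $\mathrm{Im}(\Psi_H^\dagger d\Psi_H)/\Psi_H^\dagger\Psi_H = \mathrm{Im}\bigl((H^*\Phi)^\dagger d(H^*\Phi)\bigr)/|H^*\Phi|^2 + \tfrac12(\text{the }t_3\text{-component of }G^{-1}dG)$. Using the modulus--argument split $\Phi = e^{M/2 + i\chi}$, the first term is $d(H^*\chi)$, so this reproduces the $-d(H^*\chi)$ summand in \eqref{modarggauge} plus an extra term coming from $G^{-1}dG$, which I expect to combine with the $\sigma_3$-correction. For the first (Hodge-star) term I would feed in the expression above for $\Psi_H^\dagger d\vec x\cdot\vec\tau\,\Psi_H$, divide by $\Psi_H^\dagger\Psi_H = \Omega^{-2}|H^*\Phi|^2$ so the $|H^*\Phi|^2$ cancels, and obtain (up to the $G$-conjugation bookkeeping) $-\tfrac12\star_{\R^3} d\bigl(\Omega^{-1}\theta_3 \cdot(\ldots)\bigr)$; comparing with \eqref{modarggauge}, $-\tfrac12\star_{\R^3}(d(H^*M)\wedge\theta_3)$, the discrepancy should again be a $\theta_3$-term independent of $\Phi$.

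The main obstacle will be the careful treatment of the $G$-conjugation and the $\Omega$-rescaling: both the Hodge-star term and the $\mathrm{Im}$ term in \eqref{LossYauformula} are built so as to cancel the spin-connection-like contributions, but here the ``gauge frame'' is $G\Omega^{-1}$ rather than the identity, so genuine non-trivial terms survive and these are exactly what must add up to $\tfrac34 H^*\sigma_3$. Concretely, I expect the $\Omega$-derivative pieces ($\Omega^{-1}d\Omega = 2\vec x\cdot d\vec x/(\ell^2+r^2)$) and the $G^{-1}dG$ pieces to conspire, via the identity already used in the proof of Lemma~\ref{conformal} that $\tau^j\iota_{\partial_j}\bigl(\tfrac12(G\,dG^{-1} + G^{-1}dG) + \Omega^{-1}d\Omega\bigr) = 0$, to leave precisely a multiple of $H^*\sigma_3$. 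The cleanest route is therefore: (i) show $A_{\Psi_H}$ is gauge-equivalent, via the known $G\Omega^{-1}$ intertwiner, to the Loss--Yau potential built from $\Psi = (\Phi,0)^t$ directly on $S^3$ relative to the frame \eqref{ONframe}; (ii) compute that $S^3$ Loss--Yau potential and recognize it, using \eqref{aformula_phi} and the frame decomposition \eqref{derivdecomp}, as $A + \tfrac34\sigma_3$; (iii) pull back with $H$. Step (ii) is where the coefficient $\tfrac34$ emerges — it is the same $\tfrac34$ that appeared in $A' = A + \tfrac34\sigma_3$ in Theorem~\ref{dirvor}, reflecting that the $\tfrac{3}{2\ell}$ curvature term of $\Dslash_{S^3}$ contributes half its value to the ``connection seen by the spinor'' — and I would double-check it by evaluating both sides against $X_3$ (where $A(X_3) = n-1$ and one reads off the normalisation) and against $X_+$ (where \eqref{useforDirac} controls everything). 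The remaining verifications are routine substitutions of \eqref{pullbacksigma}, \eqref{rotatingsigma}, \eqref{theta3exp} and the Hodge-star relations \eqref{Mresult}–\eqref{chiresult} already established.
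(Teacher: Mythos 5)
Your proposal follows essentially the same route as the paper's proof: substitute $\Psi_H=\Omega^{-1}G\,(H^*\Phi,0)^t$ into \eqref{LossYauformula}, let the $|H^*\Phi|^2$ factors cancel so that the $\Phi$-dependent pieces reassemble into $H^*A$ via the modulus--argument formula \eqref{modarggauge}, and isolate the $\Phi$-independent leftover coming from the $G$-conjugation and the $\Omega$-rescaling as the candidate for $\tfrac34 H^*\sigma_3$. The divergence-free claim is also handled in the same spirit (the paper derives it from the non-linear equation \eqref{Bb}; your observation that it holds for any magnetic Dirac zero-mode on $\R^3$ is equally valid). The one caveat is that the step you leave as ``I expect the pieces to conspire'' is precisely the only non-routine part of the argument, and the identity you cite for it --- the Clifford-contracted relation $\tau^j\iota_{\partial_j}\left(\tfrac12(G\,dG^{-1}+G^{-1}dG)+\Omega^{-1}d\Omega\right)=0$ from the proof of Lemma \ref{conformal} --- cannot deliver it, because what is needed here is an identity between $1$-forms, not between their contractions against $\tau^j\partial_j$. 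The paper instead proves the $1$-form identity $\tfrac{\Omega^2}{2}\star d\!\left(\Omega^{-2}G^{-1}d\vec x\cdot\vec\tau\,G\right)-iG^{-1}dG=-\tfrac{3i}{2}H^{-1}dH$ by rewriting the first term with \eqref{pullbacksigma} and $\star d(H^{-1}dH)=\Omega^{-1}H^{-1}dH$, and then eliminating $G^{-1}dG$ via the relation \eqref{GH}; taking the $(1,1)$ matrix entry then yields exactly $\tfrac34 H^*\sigma_3$. Your alternative route via an ``$S^3$ Loss--Yau potential'' would presumably also work, but it requires first establishing how \eqref{LossYauformula} transforms under the frame rotation and conformal rescaling, which amounts to the same computation.
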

\begin{proof}
As shown above, the spinor $\Psi_H$ \eqref{PPsi}  and the gauge potential $H^*A+ \frac 3 4 H^*\sigma_3$ satisfy the coupled equations \eqref{coupledr3}.  By  virtue of the non-linear equation \eqref{Bb}, the spin density for  $\Psi_H$  automatically has  vanishing divergence. 
Using the expression \eqref{PPsi} and the modulus-argument decomposition \eqref{Phimodarg} pulled back to $\R^3$,i.e., 
\bee
H^*\Phi = e^{\frac{1}{2}H^*M + i H^*\chi},
\eee
one then  computes
\bee
A_{\Psi_H}= -\frac{1}{2}\star \left(d (H^*M) \wedge\theta_{3}\right)-d (H^*\chi) -(1,0)
\left(\frac {\Omega^2}{2} \star d \left( \frac{G^{-1}d  \vec{x} \cdot \vec{\tau}G }{\Omega^2}\right) -i G^{-1} dG\right)\begin{pmatrix}
1\\0
\end{pmatrix}
\eee
where all Hodge star operations now refer to $\R^3$. 
The first two terms combine to the expression \eqref{modarggauge} for  $H^*A$.
The first term  inside the expectation value can be re-written, using \eqref{rotatingsigma} and \eqref{pullbacksigma}:
\begin{align}
\frac {\Omega^2}{2} \star d \left( \frac{G^{-1}d  \vec{x} \cdot \vec{\tau}G }{\Omega^2}\right)
&=- \Omega^ 2 \star d \left( \frac{i}{\Omega} H^{-1} dH \right)  \nonumber \\
&= -i H^{-1}dH  +i\star ( d\Omega\wedge H^{-1} dH), 
\end{align}
where we used
\bee
\star d(H^{-1} dH)=\frac 1 \Omega H^{-1} dH.
\eee
Next we use the relation \eqref{GH} to  express $G^{-1}dG $ in terms of $H$ and $\Omega$, to deduce
\bee
\frac {\Omega^2}{2} \star d \left( \frac{G^{-1}d  \vec{x} \cdot \vec{\tau}G }{\Omega^2}\right) -i G^{-1} dG =
-\frac {3i} {2} H^{-1}dH.
\eee
Then the observation
\bee
(1,0)\left( \frac {3i}{ 2} H^{-1}dH \right)\begin{pmatrix}
1\\0
\end{pmatrix} = \frac{3}{4}H^*\sigma_3,
\eee
 completes the proof.
\end{proof}

The formula \eqref{LossYauformula} is the starting point of several treatments   in the literature of magnetic zero-modes, particularly in the papers \cite{AMN1,AMN2,AMN3} by Adam, Muratori and Nash (AMN). The AMN construction gives magnetic zero-modes in terms of solutions of the Liouville equation. However, by effectively  pulling back local expressions for sections on $S^2$ to $S^3$ via the Hopf map it introduces additional singularities which we will discuss  in more detail  in Sect. \ref{singsect}.

\subsection{Zero-mode combinatorics}
It is natural to wonder if the linear magnetic zero-modes \eqref{linearmode} and the vortex zero-modes \eqref{PPsi} can be combined  to produce new zero-modes. This is indeed possible when one picks $s=j$ in \eqref{linearmode}, and notes that, according to \eqref{Y}, 
\bee
Y^j_{jm} = C_{jmj}z_1^{j-m} z_2^{j+m},  \quad j\in \frac 12 \N^0,
\eee
so that a linear combination of such functions gives another homogeneous polynomial 
\bee
P= A_0z_1^{2j}+ A_1z_1^{2j-1}z_2 + \ldots + A_{2j}z_2^{2j}, 
\eee
of degree $2j$. Since such a polynomial satisfies $iX_3P= j P$ and $X_+P=0$, it is easy to check that one can combine it with a vortex configuration $(\Phi,A)$ of degree $2n-2\geq 0$ to get a solution 
\bee
\label{holmode}
\Psi = G\Omega^{-1} H^* \begin{pmatrix} P\Phi \\ 0 \end{pmatrix}, \qquad H^*A'= H^*A + \left(j+ \frac{3}{4}\right)H^*\sigma_3,
\eee 
of the coupled Dirac equations \eqref{coupledr3} on $\R^3$. Physically, the inclusion of $P$ in the spinor adds a multiple of the background field $\vec{b}$ to the solution. 

There is  an obvious mirror version  of all our solutions in the anti-holomorphic world: for negative $n$ and $s=-j$, one can write down vortex configurations $(\bar \Phi,-A)$ in terms of anti-holomorphic polynomials, and obtain corresponding Dirac zero-modes 
\bee
\Psi = G\Omega^{-1} H^* \begin{pmatrix}  0 \\ \bar{P} \bar{\Phi} \end{pmatrix},
\eee 
of the Dirac operator coupled to $ -H^*A - \left(j+ \frac{3}{4}\right)H^*\sigma_3.$
These are nothing but the charge-conjugates \eqref{cc}   of the holomorphic solutions \eqref{holmode}.

\section{Popov vortices on $S^2$ and Cartan connections}
\subsection{Popov vortices from vortices on $S^3$}
\label{popovsect}
We now turn to the promised explanation of the link between our  vortex equations on $S^3$ and vortex equations on $S^2$ whose solutions are called Popov vortices.  Before we write down the equations,  we introduce our notation for the round geometry of the 2-sphere. 

In  a stereographic coordinate $z$ defined by projection from the south pole, the round metric of a 2-sphere  of arbitrary radius $\rs$  is 
\bee
ds^2 = \frac{4\rs^2 dz d\bar{z}} {(1+|z|^2)^2},
\eee
and  a  possible complexified frame  field  $e=e_1+ie_2$ is given by
\bee
\label{compframe}
e=\frac{2\rs}{1+|z|^{2}}dz,\quad \bar{e}=\frac{2\rs}{1+|z|^{2}}d\bar{z}.
\eee
In terms of this frame field, the structure equations can be written as a single complex equation 
\bee
\label{structure}
 de-i\Gamma\wedge e =0,
\eee
which determine the  spin connection 1-form $\Gamma$ as 
\bee
\label{spincon}
\Gamma = i\frac{zd \bar{z} -\bar{z} dz}{1+|z|^2}.
\eee
The topology of the 2-sphere does not permit a globally defined frame, and 
one checks that the  frame \eqref{compframe} is singular at  $z=\infty$ (the south pole) by switching to $\zeta =1/z$ \cite{JS} and noting that $e$ behaves likes $\bar{\zeta}^2/|\zeta|^2 d\zeta$  near $\zeta =0$;  $\Gamma$, too, has a singularity at $z=\infty$.
In our chart, the Riemann curvature form is
\bee
\mathcal R = d\Gamma, 
\eee
and is related to the frame via the usual Gauss equation
\bee
\label{Gauss}
\mathcal R =  K e_1\wedge e_2 = \frac{i}{2\lambda^2}e\wedge \bar{e} ,
\eee
where $K=1/\rs^2$ is the Gauss curvature. Thus 
\bee
\label{2darea}
\mathcal R = 2i\frac{dz \wedge d\bar{z}  }{(1+|z|^2)^2},
\eee
which integrates to $4\pi$.

In \cite{Manton1}, $\lambda =\sqrt{2}$ is chosen and  the Popov equations are expressed in terms of the associated K\"ahler form. However, as we shall see it is more natural to write them in terms of the Riemann curvature form. A Popov vortex is   defined on a principal $U(1)$ bundle of degree $2n-2$ over the 2-sphere. It is a pair $(\phi,a)$ of a connection $a$ on this bundle and a  section $\phi$  of  the associated complex line bundle.  
With $a= a_z dz + a_{\bar{z} }d\bar{z} $ and 
$f=f_{z\bar z} dz\wedge  d\bar{z}=da$, the vortex equations in \cite{Manton1} are 
\bee
\label{Popov}
\partial_{\bar z} \phi -i a_{\bar z} \phi =0, \qquad f= (|\phi|^2 -1) \mathcal R.
\eee
As also explained in \cite{Manton1},  solutions are obtained  from   rational maps $R: S^2 \rightarrow S^2$ of degree $n$ which, in our coordinate $z$, take the form \eqref{Rfirst}.
 The Popov vortices are determined  by
\bee
\label{popoveq}
a= R^*\Gamma-\Gamma, \qquad R^*e= \phi e. 
\eee
The second of these equations determines $\phi$ as 
\bee
\phi = \frac{R' (1+|z|^2)}{1+|R|^2} = (p_2' p_1- p_1' p_2)\frac{1+|z|^2}{|p_1|^2  + |p_2|^2}\frac{\bar{p}_1}{p_1},
\eee
which has singularities at the zeros of $p_1$ which we will discuss below (see also \cite{Manton1}).
Note also that \eqref{popoveq} implies that 
\bee
R^*(e\wedge \bar e)= |\phi|^2 e\wedge \bar e, 
\eee
so that 
\bee
f= d(R^*\Gamma -\Gamma) = R^*\mathcal R - \mathcal R = (|\phi|^2 -1) \mathcal R.
\eee
follows immediately. 

We would like to relate the Popov equations and their solutions to vortices on the 3-sphere studied in Sect.~\ref{vorts3}. As reviewed in Sect.~\ref{conventions},
the Hopf projection  $S^3\simeq SU(2)\rightarrow S^2$ in terms of  complex  coordinates  $(z_1,z_2)$  for  $h\in SU(2)$ (see \eqref{complexp})  and the complex  stereographic coordinate $z$ on $S^2$ is $\pi:  h \mapsto z_2/z_1$, and  
a local section of this bundle is given by \eqref{locsec}.
\begin{lem}
\label{pulledvortex}
The pull-back of the vortex equations  \eqref{vortexs3} on $S^3$ via the section $s$ \eqref{locsec} yields the  Popov equations  \eqref{Popov} up to a singular gauge transformation. 
\end{lem}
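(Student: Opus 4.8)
The strategy is to compute the pull-back $s^*$ of each ingredient of the $S^3$ vortex equations \eqref{vortexs3} and match the result against the Popov data in \eqref{Popov}--\eqref{popoveq}, keeping careful track of the gauge. The key point is that the section $s$ in \eqref{locsec} is exactly the local section associated with stereographic projection, so the pull-backs of the left-invariant forms $\sigma,\sigma_3$ should reproduce (up to normalisation) the frame $e$ in \eqref{compframe} and the spin connection $\Gamma$ in \eqref{spincon}. First I would use the explicit complex-coordinate expressions $\sigma = 2i(z_1dz_2 - z_2dz_1)$ and $\sigma_3 = 2i(\bar z_1 dz_1 + \bar z_2 dz_2)$ together with $s(z) = (1+|z|^2)^{-1/2}(1,z)$, i.e. $z_1 = (1+|z|^2)^{-1/2}$, $z_2 = z(1+|z|^2)^{-1/2}$, and compute $s^*\sigma$ and $s^*\sigma_3$ directly. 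One finds $s^*\sigma$ proportional to $(1+|z|^2)^{-1}dz$ — that is, proportional to $e$ with $\rs = 1$ — and $s^*\sigma_3$ equal to the spin connection form $\Gamma$ of \eqref{spincon} (possibly with a constant factor determined by conventions).

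Next, I would set $\phi := s^*\Phi$ (restricted to the chart) and $a := s^*A$, and push the two equations in \eqref{vortexs3} through $s^*$. For the first equation $(d\Phi + iA\Phi)\wedge\sigma = 0$: since $s^*$ commutes with $d$, this becomes $(d\phi + ia\phi)\wedge s^*\sigma = 0$; because $s^*\sigma$ is a nonzero multiple of $dz$, wedging with it kills exactly the $dz$-component, leaving only the $d\bar z$-component of $d\phi + ia\phi$, which is precisely $\partial_{\bar z}\phi - ia_{\bar z}\phi = 0$ after writing $a = a_zdz + a_{\bar z}d\bar z$ (the sign works out since $ia\phi$ contributes $+ia_{\bar z}\phi\, d\bar z$ — here a sign or conjugation bookkeeping step is needed, which I address below). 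For the second equation $F_A = \tfrac i2(|\Phi|^2 - 1)\bar\sigma\wedge\sigma$: applying $s^*$ gives $da = \tfrac i2(|\phi|^2 - 1)\, s^*(\bar\sigma\wedge\sigma)$, and since $s^*\sigma \propto dz$, $s^*\bar\sigma\wedge s^*\sigma$ is a multiple of $d\bar z\wedge dz$ proportional to $\mathcal R$ of \eqref{2darea}; matching the constants (which is where the factor $\tfrac i2$ and the radius normalisation in \eqref{Gauss} get absorbed) yields $f = (|\phi|^2-1)\mathcal R$.

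The remaining issue is the \emph{singular gauge transformation} in the statement. The pull-back $a = s^*A$ need not be exactly the Popov connection \eqref{popoveq}; rather, the normalisation condition $A(X_3) = n-1$ and the equivariance $i\mathcal L_{X_3}\Phi = (n-1)\Phi$ mean that $\Phi$, as an equivariant function, corresponds to a section of $H^{2n-2}$, and the local trivialisation via $s$ introduces the same kind of transition factor (a power of $\bar p_1/p_1$ or $z_1/\bar z_1$) that already appears in Manton's formula $\phi = (p_2'p_1 - p_1'p_2)\frac{1+|z|^2}{|p_1|^2+|p_2|^2}\frac{\bar p_1}{p_1}$, which is singular at the zeros of $p_1$. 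Concretely, I would exhibit the abelian gauge transformation $\Phi \mapsto e^{-i\alpha}\Phi$, $A \mapsto A + d\alpha$ of \eqref{abeliangauge} — with $\alpha$ the (multivalued, hence singular) argument function that rotates between the two natural trivialisations — and check that after this transformation $s^*A$ lands precisely on $R^*\Gamma - \Gamma$ and $s^*\Phi$ on Manton's $\phi$. This bookkeeping of the gauge factor is the main obstacle: the underlying differential-form identities are short once the complex-coordinate formulas for $s^*\sigma$, $s^*\sigma_3$ are in hand, but correctly identifying which singular gauge transformation relates the ``equivariant'' trivialisation to the ``holomorphic'' Popov trivialisation — and verifying that $R = \pi\circ U\circ s$ of Theorem \ref{flatvor} is the same rational map appearing in \eqref{popoveq} — requires care with conjugations and with the behaviour at $z = 0$ and $z = \infty$.
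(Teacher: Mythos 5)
Your plan follows the paper's proof essentially step for step: compute $s^*\sigma=\tfrac{i}{\lambda}e$ and $s^*\sigma_3$ in the stereographic chart, pull back the two vortex equations along $s$, and absorb the residual factor of $p_1/\bar p_1$ (appearing in $s^*\Phi=\tfrac{p_1}{\bar p_1}\phi$ and as the term $i\,d\ln(p_1/\bar p_1)$ in $s^*A$) into a singular abelian gauge transformation, with $R=\pi\circ U\circ s$ identified as Manton's rational map. One correction to your bookkeeping: the direct computation gives $s^*\sigma_3=-\Gamma$ (not $+\Gamma$), and consequently $s^*A = -(R^*\Gamma-\Gamma)+i\,d\ln(p_1/\bar p_1)=-a+i\,d\ln(p_1/\bar p_1)$ rather than landing on $+a$ as you state; this overall minus sign is precisely what converts the $S^3$ combination $d\Phi+iA\Phi$ into the Popov form $\partial_{\bar z}\phi-ia_{\bar z}\phi=0$ (and likewise $ds^*A=-f$), so it is the actual mechanism behind the sign you flagged as ``needing care'' rather than an independent conjugation issue.
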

\begin{proof}  With $U:   SU(2) \rightarrow SU(2)$ defined in terms of polynomials $P_1,P_2$ as in \eqref{trivialisation}  we define $R:S^2 \rightarrow S^2$  in our stereographic chart by 
\bee
\label{Rdef}
R=   \pi \circ U \circ s.
\eee
This is the rational map already encountered in the proof of Theorem~\ref{flatvor}, see also the diagram \eqref{URcom}; it is  of the form \eqref{Rfirst}. 
One checks that  
\bee
\label{basic}
\pi^*\Gamma= -\sigma_3+ i d\ln\frac{z_1}{\bar{z}_1},
\eee
so that  that the pull-back with \eqref{locsec} gives
\bee
s^*\sigma_3 =-\Gamma.
\eee
Pulling  back \eqref{basic}  further  with $U$
\bee
U^*\pi^*\Gamma= -U^*\sigma_3+ i d\ln\frac{P_1}{\bar{P}_1},
\eee
and then  with $s$,
\bee
s^*U^*\pi^*\Gamma= -s^*U^*\sigma_3+ i d\ln\frac{p_1}{\bar{p}_1}
\eee
 we deduce from the definition of $R$ that  
\bee
R^*\Gamma= -s^*U^*\sigma_3+ i d\ln\frac{p_1}{\bar{p}_1}. 
\eee
Thus we find that the pull-back of the 1-form $A=U^*\sigma_3-\sigma_3$ is 
\bee
s^*A = s^*U^*\sigma_3 - s^*\sigma_3 = -R^*\Gamma +   \Gamma+ i d\ln\frac{p_1}{\bar{p}_1}.
\eee
Also noting
\bee
\label{Phipull}
s^*\Phi = \frac{p_1}{{\bar p}_1}\phi,  \qquad 
s^*\sigma = \frac{i}{\lambda}e ,
\eee
and combining this to pull-back the vortex equations on $S^3$ \eqref{vortexs3},
 we obtain 
\begin{align}
&(d(s^*\Phi) +i (s^*A)(s^*\Phi))\wedge e = 0, \nonumber \\
&\qquad ds^*A  = \frac {i} {2\lambda^2} (|\phi |^2-1)\bar e \wedge e ,
\end{align}
or, equivalently, 
\begin{align}
&\left(d\left(\frac{p_1}{{\bar p}_1}\phi\right)   +i \left( -R^*\Gamma +   \Gamma+ i d\ln\frac{p_1}{\bar{p}_1}\right) \left(\frac{p_1}{{\bar p}_1}\phi\right)\right)\wedge e = 0, \nonumber \\
&\qquad -d(R^*\Gamma) + d\Gamma  = -(|\phi |^2-1)\mathcal R .
\end{align}
Writing this in terms of the Popov connection $a$, we conclude that 
\begin{align}
&\left(d \phi    -ia \phi\right)\wedge e = 0, \nonumber \\
&\qquad -f = -(|\phi |^2-1)\mathcal R,
\end{align}
which is equivalent to the Popov equations \eqref{Popov}. 
\end{proof}

\subsection{Geometrical interpretation and singularities}
\label{singsect}
It is implicit in our summary, particularly in equation \eqref{popoveq}, that Popov vortices can also be interpreted purely geometrically. A metric viewpoint was emphasised and discussed in the more general context of vortex equation on a Riemann surface with a K\"ahler metric in \cite{Baptista}. In that paper, Baptista pointed out that vortices on a surface with metric $g$ define a new geometry by rescaling with the Higgs field 
\bee
g\rightarrow  g'=|\phi|^2 g.
\eee
The new metric degenerates precisely at the zeros $Z_j$, $j=1,\ldots,2n-2$ of the Higgs field (not necessarily distinct), but its Levi-Civita connection has a Riemann curvature 2-form  which, as explained in \cite{Baptista}, can naturally be extended to the zeros by including delta-function singularities
\bee
\mathcal{R}'= \mathcal{R} + f -2\pi   \sum_{j=1}^{2n-2} \delta_{Z_j}.
\eee
Geometrically, the rescaled metric $g'$ has a singularity with a surplus angle $2\pi n$ at a zero of multiplicity $n$. 
Such singularities can also be thought of as conical singularities with a `negative deficit' angle, i.e.,  with an excess angle. They resemble a ruffled collar, and  are sometimes called `Elizabethan geometries' in the literature. 

By virtue of  $a$ being a connection on a line bundle of  degree $2n-2$, we know that 
\bee
\int_{S^2} f = 4\pi n - 4\pi.
\eee
When integrating $\mathcal{R}'$ this is cancelled by the delta-function contributions,  and so
\bee
\int_{S^2} \mathcal{R}' = 4\pi,
\eee
assuring  that the usual Gauss-Bonnet formula applies to $\mathcal{R}'$. This should be contrasted with  the pull-back curvature  $R^*\mathcal{R}$ which integrates to $4\pi n$.

In the metric interpretation,  the zeros of the Higgs field lead to  singularities whereas the actual singularities of the Higgs field do not appear to play a special role. To understand the geometric interpretation of the singularities of the Higgs field, we need to consider the frame field defined by it. The complexified frame field  
\bee
\phi e = 2 \lambda \frac{p_2' p_1- p_1' p_2}{|p_1|^2  + |p_2|^2}\frac{\bar{p}_1}{p_1} \, dz
\eee
has singularities at each  the zeros of $p_1$, i.e., at each of the pre-images of the singularity of the frame $e$ under the map $R$.  If $q$ is a zero of $p_1$, the behaviour near $q$ is 
\bee
\phi e \sim A \frac{\bar z-\bar q}{z-q}dz,
\eee
 for some constant $A$. Near $z=\infty$, we use again $\zeta =1/ z$ to write the leading term as  
\bee
\phi e \sim B \frac{dz}{z^2} = -B d\zeta, \quad B \; \text{constant},
\eee
 which is smooth. It follows that the winding number of the frame field is localised at the zeros of $p_1$, with each zero (counted with multiplicity) contributing a winding of $4\pi$.  

This interpretation is gauge dependent. 
Using \eqref{Phipull}, we find that  the frame field
\bee
s^*(\Phi \sigma) =  2i \left(\frac{p_2' p_1- p_1' p_2}{|p_1|^2  + |p_2|^2}\right) \, dz
\eee
has no singularities for finite $z$, but has a singularity at $z=\infty$, where it behaves like
\bee
\label{inftysing}
s^*(\Phi \sigma)\sim C \left(\frac{z}{|z|}\right)^{2n} \frac{dz}{z^2}= - C\left(\frac{\bar \zeta}{|\zeta|}\right)^{2n} d\zeta
\eee
for yet another constant $C$.  In this gauge, the full phase rotation of  $4\pi n$ is concentrated  at $z=\infty$.

Our discussion  shows that any description of the magnetic zero-modes in terms of the Popov vortex fields invariably has singularities since the Popov vortex  is  a section of and a connection on a non-trivial bundle, neither of which permits a globally smooth expression.  This  also  applies to the expressions derived in \cite{AMN2,AMN3}, which, in our terminology,  express the magnetic zero-modes in terms of the modulus and phase of a scalar Popov vortex field (whose modulus obeys a Liouville equation). While one can shift the location of the singularities with gauge transformations, one cannot remove them  on $S^2$.

\subsection{Gauge potentials for Cartan connections}
Cartan connections combine the frame and spin connection into a non-abelian connection.
We now show how the results of the previous section can be expressed in the language of Cartan geometry. We first  exhibit a local gauge potential for a Cartan connection constructed from the frame and connection defined by a Popov vortex, and then show how it is related to the the gauge potential
$\mathcal{A}$ used for describing vortices on the 3-sphere in Theorem~\ref{flatvor}.

\begin{lem}
Combine the frame \eqref{compframe} and spin  connection \eqref{spincon} of the 2-sphere  into the $su(2)$ gauge potential
\bee
\label{S2cartan}
\hat{A}=-\Gamma t_{3}+\frac{i}{2\lambda}e \,t_{-}-\frac{i}{2\lambda}\bar{e}\,t_{+}
\eee
defined on the 2-sphere without the south pole. Then the flatness condition for $\hat A$ 
is equivalent to the structure equation \eqref{structure} and Gauss equation \eqref{Gauss} on a $2$-sphere of radius $\lambda$. Moreover, the flatness of the pull-back $R^{*}\hat{A}$ via the rational map $R$  \eqref{Rfirst} is equivalent to the Popov equations being satisfied by the pair $(\phi,a)$ defined via \eqref{popoveq}.
\end{lem}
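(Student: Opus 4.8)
The plan is to compute the $su(2)$ curvature $F_{\hat A}=d\hat A+\hat A\wedge\hat A$ directly and read off its components along $t_3$, $t_+$, $t_-$, using the commutators $[t_3,t_\pm]=\mp i t_\pm$ and $[t_+,t_-]=-2it_3$ from Sect.~\ref{conventions}. Because the symmetric part of $t_at_b$ drops out against the antisymmetry of wedges of $1$-forms, $\hat A\wedge\hat A$ reduces to a sum of three commutator terms; carrying out this bookkeeping gives $\hat A\wedge\hat A=\tfrac{1}{2\lambda}(\Gamma\wedge e)\,t_-+\tfrac{1}{2\lambda}(\Gamma\wedge\bar e)\,t_+ +\tfrac{i}{2\lambda^2}(e\wedge\bar e)\,t_3$. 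Adding $d\hat A$, the $t_-$ component of $F_{\hat A}$ is $\tfrac{i}{2\lambda}(de-i\Gamma\wedge e)$, the $t_+$ component is $-\tfrac{i}{2\lambda}(d\bar e+i\Gamma\wedge\bar e)$ (its complex conjugate), and the $t_3$ component is $-d\Gamma+\tfrac{i}{2\lambda^2}e\wedge\bar e$. Hence $F_{\hat A}=0$ holds if and only if the structure equation \eqref{structure} (together with its conjugate) and the Gauss equation \eqref{Gauss} for a sphere of radius $\lambda$ hold; since the frame \eqref{compframe} and connection \eqref{spincon} satisfy both identically, $\hat A$ is flat. The radius $\lambda$ enters only through matching the $t_3$ component with the Gauss curvature $1/\lambda^2$.

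For the second assertion I would pull back by the holomorphic map $R$ of \eqref{Rfirst}. Holomorphicity forces $R^{*}e$ to be a $(1,0)$-form, hence $R^{*}e=\phi e$ for a function $\phi$, which is the second relation in \eqref{popoveq}, while $a:=R^{*}\Gamma-\Gamma$ is the first. Consequently $R^{*}\hat A=-(a+\Gamma)t_3+\tfrac{i}{2\lambda}\phi e\,t_--\tfrac{i}{2\lambda}\bar\phi\bar e\,t_+$ has exactly the algebraic form of $\hat A$ with $\Gamma$ replaced by $a+\Gamma$ and $e$ by $\phi e$, so the previous computation applies verbatim. Using the already-established base identities $de=i\Gamma\wedge e$ to rewrite $d(\phi e)=(d\phi+i\phi\,\Gamma)\wedge e$, and $\tfrac{i}{2\lambda^2}e\wedge\bar e=\mathcal{R}=d\Gamma$ to simplify the $t_3$ term, the $t_-$ component of $F_{R^{*}\hat A}$ collapses to $\tfrac{i}{2\lambda}(d\phi-ia\phi)\wedge e$ and the $t_3$ component to $-da+(|\phi|^2-1)\mathcal{R}$, with the $t_+$ component the conjugate of the first. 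Since $e\propto dz$, the vanishing of the $t_-$ piece is equivalent to $\partial_{\bar z}\phi-ia_{\bar z}\phi=0$ and the vanishing of the $t_3$ piece to $f=da=(|\phi|^2-1)\mathcal{R}$, i.e.\ to the Popov equations \eqref{Popov}.

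This is a short component-by-component computation rather than a structural argument, so there is no real obstacle; the only points requiring care are the signs in the non-abelian wedge product and the consistent use of the $\lambda$-dependent normalisations in \eqref{compframe} and \eqref{Gauss}, so that the factors of $1/\lambda$ and $1/\lambda^2$ cancel where they should. A convenient internal check is that the first assertion is the special case $R=\mathrm{id}$ (equivalently $\phi\equiv 1$, $a\equiv 0$) of the second. I would also note that, as the pull-back of the flat connection $\hat A$, the potential $R^{*}\hat A$ is automatically flat on the domain where $R$ avoids the south pole; the computation above therefore re-derives, in Cartan-geometric language, the fact that the pair \eqref{popoveq} solves the Popov equations, in agreement with Lemma~\ref{pulledvortex}, and makes contact with the singularity discussion of Sect.~\ref{singsect}.
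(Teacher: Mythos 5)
Your proposal is correct and follows essentially the same route as the paper: compute $F_{\hat A}=d\hat A+\hat A\wedge\hat A$, identify the $t_3$ and $t_\pm$ components with the Gauss and structure equations, then substitute \eqref{popoveq} into $R^*\hat A$ and read off the Popov equations from the components of its curvature. Your explicit bookkeeping of $\hat A\wedge\hat A$ and the observation that the first assertion is the $R=\mathrm{id}$ special case are consistent with, and slightly more detailed than, the paper's computation.
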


In the language of Cartan connections, this lemma says that $\hat{A}$ is a gauge potential for a Cartan connection describing the round 2-sphere and that  $R^{*}\hat A$ is a gauge potential for a Cartan connection describing the deformed geometry defined by the vortex $(\phi,a)$.

\begin{proof}
Calculating the curvature of the connection $\hat A$ gives
\bee
F_{\hat{A}}=d\hat{A}+\frac{1}{2}[\hat{A},\hat{A}]=-(\mathcal{R} -\frac{i}{2\lambda^{2}}e\wedge \bar{e})t_{3}+\frac{i}{2\lambda} (de-i\Gamma \wedge e) t_- -\frac{i}{2\lambda} (d\bar{e}+i\Gamma \wedge \bar{e})t_+,
\eee
from which we can read off that the vanishing of the coefficient of $t_{3}$ is equivalent to the Gauss equation and the coefficients of $t_{\pm}$ vanishing are equivalent to the structure equations. 
Using \eqref{popoveq}, we have 
\bee
\label{Popovcartan}
R^{*}\hat A=-(a+\Gamma)t_{3}+\frac{i}{2\lambda} \phi e \,t_{-}-\frac{i}{2 \lambda}\bar{\phi}\bar{e}\, t_{+},
\eee
 with curvature
\bee
R^* F_{\hat A} =-(da-(|\phi|^{2}-1)\mathcal{R})t_{3}+\frac{i}{2\lambda} (d\phi -ia\phi)\wedge e \,t_- -\frac{i}{2\lambda}(d\bar{\phi}+ia\bar{\phi})\wedge\bar{e}\,t_+.
\eee
This being zero is equivalent to the Popov equations \eqref{Popov} being satisfied.
\end{proof}

The gauge potential $R^*\hat A$ inherits singularities from the singularities of  $\phi e$ discussed earlier. In order to treat this more carefully, we use the notion of  a principal divisor $D=\sum n_j q_j$ of degree $n$ on $S^2$. Given such a divisor we construct a bundle over $S^2\setminus \{q_j\}$ by removing the union of the fibres over the $q_j$ from $SU(2)$, obtaining the total space
\bee
\label{PDdef}
P_D= SU(2) \setminus \bigcup_j \pi^{-1} (q_j).
\eee
For a homogeneous polynomial $P$ of degree $n$  in $z_1,z_2$, let $D$ be the divisor of zeros of the associated inhomogeneous polynomial $p$ (so $P(z_1,z_2)= z_1^n p\left(\frac{z_2}{z_1}\right)$).  
Then  we can define the map
\bee
\label{rmap}
r_{P}:P_D  \rightarrow SU(2), \quad r_P = \begin{pmatrix}
\frac{\bar{P}}{|P|}&0\\
0&\frac{P}{|P|}
\end{pmatrix},
\eee
and the pull-back
\bee
r_p=s^*r_P: S^2 \setminus \{q_j\} \rightarrow SU(2).
\eee
It  has the form
\bee
r_{p}=\begin{pmatrix}
\frac{\bar{p}}{|p|}&0\\
0&\frac{p}{|p|}
\end{pmatrix}.
\eee
 For later use we note the behaviour of this matrix under fibre rotations.  
Identifying $h$ with $(z_{1},z_{2})$ as in \eqref{complexp}, we have 
\bee
 \label{requivariance}
r_{P}(he^{\frac{\gamma}{n} t_{3}})=e^{-\gamma t_{3}}r_{P}(h), \quad \gamma \in \left[0,4\pi \right) .
\eee
\begin{lem}
With $s$ defined as in \eqref{locsec}, the gauge potential for the Cartan connection of the 2-sphere is trivialised by $s$:
\bee
\label{Asformula}
\hat A = s^{-1} ds.
\eee
Moreover, if $U$ is the bundle map \eqref{trivialisation} covering the rational map $R=p_2/p_1$,   the gauge potential 
 $R^{*}A$  for the deformed Cartan geometry and the pull-back  via $s$ of  $\A=U^{-1}dU$  are related through the singular gauge transformation $r_{p_1}$:
\bee
R^*\hat A =r_{p_1}^{-1}s^*\left(\mathcal A\right)r_{p_1}+r_{p_1}^{-1}dr_{p_1}.
\eee
\end{lem}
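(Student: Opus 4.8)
The plan is to prove the two claims of the lemma separately, both by direct computation using the complex-coordinate formulas already established, with the second claim reduced to the first via the conjugation by $r_{p_1}$.

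For the first claim, $\hat A = s^{-1}ds$, I would compute $s^{-1}ds$ directly from the explicit section \eqref{locsec}. Writing $s(z) = (1+|z|^2)^{-1/2}\begin{pmatrix} 1 & -\bar z \\ z & 1 \end{pmatrix}$, a short matrix computation gives $s^{-1}ds$ as an $su(2)$-valued 1-form; I then match its $t_3$-component against $-\Gamma$ using \eqref{spincon}, and its $t_\pm$-components against $\pm\frac{i}{2\lambda}e^{(\mp)}$ using \eqref{compframe}. In fact this is essentially the content of the already-quoted identities $s^*\sigma_3 = -\Gamma$ and $s^*\sigma = \frac{i}{\lambda}e$ from the proof of Lemma~\ref{pulledvortex}: since $h^{-1}dh = \sigma_1 t_1 + \sigma_2 t_2 + \sigma_3 t_3 = \sigma_3 t_3 + \tfrac12(\sigma t_- + \bar\sigma t_+)$, pulling back along $s$ and substituting $s^*\sigma_3 = -\Gamma$, $s^*\sigma = \tfrac{i}{\lambda}e$, $s^*\bar\sigma = -\tfrac{i}{\lambda}\bar e$ yields exactly \eqref{S2cartan}. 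So the first claim is immediate from the pull-back identities already recorded.

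For the second claim, the key observation is that $U\circ s$ and $s\circ R$ are two lifts of the same map $S^2 \to S^2$ (both cover $R = \pi\circ U\circ s$ by the commutative diagram \eqref{URcom}), so they differ by a fibre rotation: concretely, $U(s(z)) = s(R(z))\, g(z)$ for some $U(1)$-valued function $g$. Comparing the $(1,1)$-entries and using the explicit forms of $U$ \eqref{trivialisation}, $s$ \eqref{locsec}, and $R = p_2/p_1$, one identifies $g = r_{p_1}^{-1}$ (up to the normalisation built into $r_{p_1}$): the phase mismatch between $P_1/\sqrt{|P_1|^2+|P_2|^2}$ pulled back via $s$ and $1/\sqrt{1+|R|^2}$ is exactly $p_1/|p_1|$, which is the nontrivial entry of $r_{p_1}$. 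Given $U\circ s = (s\circ R)\, r_{p_1}^{-1}$, I apply the standard cocycle identity for Maurer-Cartan forms under right multiplication: $s^*(U^{-1}dU) = (U\circ s)^{-1}d(U\circ s)$, and for a product $FG$ one has $(FG)^{-1}d(FG) = G^{-1}(F^{-1}dF)G + G^{-1}dG$. With $F = s\circ R$ and $G = r_{p_1}^{-1}$, this gives $s^*\mathcal A = r_{p_1}\, R^*(s^{-1}ds)\, r_{p_1}^{-1} + r_{p_1}\, d(r_{p_1}^{-1})$. Using the first claim $s^{-1}ds = \hat A$ and rearranging (noting $r_{p_1}\,d(r_{p_1}^{-1}) = -dr_{p_1}\,r_{p_1}^{-1}$, so conjugating the whole equation by $r_{p_1}^{-1}$) produces $R^*\hat A = r_{p_1}^{-1}\, s^*\mathcal A\, r_{p_1} + r_{p_1}^{-1}dr_{p_1}$, as claimed.

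The main obstacle is pinning down the factorisation $U\circ s = (s\circ R)\,r_{p_1}^{-1}$ precisely, including getting the phase convention and the equivariance behaviour \eqref{requivariance} to be mutually consistent — that is, checking that the $U(1)$ ambiguity in the lift is resolved exactly by $r_{p_1}$ and not by $r_{p_1}$ times some residual unimodular factor (which would have to be constant on fibres and hence would contribute an extra abelian gauge term). This is where one genuinely has to use the hypothesis that $P_1, P_2$ are the homogeneous polynomials of \eqref{trivialisation} rather than arbitrary functions, and where the normalisation $a_0, b_0, a_n, b_n \neq 0$ and the degree-$n$ condition ensure $r_{p_1}$ is well-defined on $P_D$ with the stated divisor. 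Everything else — the Maurer-Cartan cocycle identity and the substitution of the first claim — is formal. I would also remark that the singular gauge transformation $r_{p_1}$ is precisely the one appearing implicitly in the proof of Lemma~\ref{pulledvortex} (there the mismatch factor $p_1/\bar p_1$ in $s^*\Phi$ and $s^*A$ is $r_{p_1}^{-2}$ in the relevant entry), tying the two results together.
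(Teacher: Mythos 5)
Your proposal is correct and follows essentially the same route as the paper: the first claim via the pull-back identities $s^*\sigma_3=-\Gamma$, $s^*\sigma=\frac{i}{\lambda}e$ applied to the Maurer--Cartan form, and the second via the factorisation $s\circ R=(U\circ s)\,r_{p_1}$ together with the cocycle identity for $(FG)^{-1}d(FG)$. The only cosmetic difference is that you apply the cocycle identity to $U\circ s=(s\circ R)r_{p_1}^{-1}$ and then conjugate back, whereas the paper applies it directly to $s\circ R=(U\circ s)r_{p_1}$; your verification of the phase mismatch $p_1/|p_1|$ is exactly the check the paper leaves implicit in ``one checks that.''
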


\begin{proof}  The formula \eqref{Asformula} follows by an elementary calculation and comparison with the definition of $e$ and $\Gamma$ in terms of $z$ in \eqref{compframe} and \eqref{spincon}.
With the map $U:S^3\rightarrow S^3$ defined in terms of polynomials $P_1,P_2$ as in \eqref{trivialisation}, and  the map $R:S^2\rightarrow S^2$ defined as in \eqref{Rdef}, 
one checks that, 
\bee
\label{Usexp}
 U\circ s= \frac{1}{\sqrt{|p_1|^2 + |p_2|^2}} \begin{pmatrix}  p_1 & -\bar{p}_2 \\ p_2 & \phantom{-} \bar{p}_1 \end{pmatrix},
\eee
and so, 
choosing  the polynomial $P_1$ used in the definition of $U$ \eqref{trivialisation},
\bee
s\circ R=  (U\circ s)r_{p_1}.
\eee
It follows that 
\bee
 (s\circ R)^{-1} d(s\circ R)= r_{p_1}^{-1}s^*\left(U^{-1}dU\right)r_{p_1}+r_{p_1}^{-1}dr_{p_1}.
\eee
Since $\mathcal A= U^{-1} dU$ and  
\bee
\label{nonabpopov}
R^*\hat A = (s\circ R)^{-1} d(s\circ R),
\eee
the claim follows. \end{proof}

While the 1-form $\mathcal A= U^{-1} dU$  is manifestly smooth on $S^3$,  its pull-back with $s$ is not. 
The map $s\circ U$  \eqref{Usexp}  has a singularity of the form $z^n/|z|^n$ at $z=\infty$, as one would expect since the pull-backs $s^*P_1$ and $s^*P_2$ are local expression for sections of line bundles of degree $n$ over $S^2$ \cite{JS}. It follows that the pull-back $s^* \mathcal A$ is singular at $z=\infty$, with the singularity  already exhibited at \eqref{inftysing}.

\subsection{Cartan geometry}
Our description  of the geometry of the 2-sphere and its pull-back via the rational map $R$  in terms of $su(2)$ gauge potentials has  been  entirely local so far. It is time to address the global geometrical structure behind these gauge potentials. We  will specify the bundles and  the connections  for which \eqref{S2cartan} and \eqref{Popovcartan}  are  local gauge potentials in the language of Cartan geometry, but 
refer the reader to the textbook \cite{Sharpe} and particularly to the PhD thesis \cite{Wise} for general definitions  and facts about Cartan geometry. 

Cartan connections describe the geometry of  manifolds modelled on homogeneous spaces $G/H$ in terms of a connection on a  principal $G$-bundle  $Q$ over this manifold. In order to recover the geometry of a  manifold  from a Cartan connection one needs an additional structure, namely a section of an associated $G/H$ bundle which is transverse to the connection or, equivalently (as explained in \cite{Wise}),  a principal $H$ subbundle $P$ of $Q$ which  is  transverse to the connection. 

Here, we are  interested in the case $G=SU(2),H=U(1)$ and $G/H=S^2$, and only consider flat Cartan connections.
However, we will need to extend the usual framework of Cartan connections to deal with singularities.   Consider a divisor $D$ of degree $n$ on $S^2$, and define   the quotient
\bee
P_{D,n}= P_D/\Z_n,
\eee
where  $P_D$ is as defined in \eqref{PDdef} and we think of $\Z_n$ as the subgroup generated by $
e^{\frac{4\pi}{n} t_3}$, 
acting from the right on $SU(2)$. This is a $U(1)$ bundle over $S^2\setminus\{q_j\}$ with the projection provided by the usual Hopf map  $\pi$ \eqref{Hopf}. It is  a Lens space with $n$ circles removed.

In order to construct the required principal $SU(2)$ bundle,  we define the $SU(2)$-bundle associated to $P_D$ via a $U(1)$ action on $SU(2)$:
\bee
Q_D= \{ (h,g) \in P_D \times SU(2)\}/\sim,
\eee
where $\sim$ is the equivalence relation
\bee
(h,g) \sim \left(he^{\frac{\gamma}{n} t_3},  g e^{\gamma t_3}\right), \quad \gamma \in [0,4\pi).
\eee
 This is an $SU(2)$ bundle over $S^2\setminus \{q_j\}$ with  projection
\bee
\Pi: Q_D\rightarrow S^2 \setminus \{q_j\}, \quad \Pi((h,g)) = \pi (h).
\eee
To make this into  a principal $SU(2)$ bundle, we pick  a homogeneous polynomial $P$ of degree $n$ and consider the map $r_P$  in \eqref{rmap}.
 Then  
 \bee
 \label{gtilde}
\tilde g =g r_P
\eee
  is well-defined on $Q_D$, by \eqref{requivariance}, and we define  the  $SU(2)$  right-action as 
\bee
u: (h,g) \mapsto (h,gr_P u).
\eee

Sections are constructed from  maps satisfying an equivariance condition 
\begin{equation}
\label{equiv}
U:P_D\rightarrow SU(2), \quad  U\left(he^{\frac{\gamma}{n} t_3}\right) =U(h)\left( e^{\gamma t_3}\right), \quad \gamma \in [0,4\pi). 
\end{equation}
This ensures that, for any $U$ satisfying this condition,  
\bee
\label{locS}
S_U:S^2\setminus \{q_j\} \rightarrow Q_D, \quad n \mapsto (h, U(h)), \qquad \pi (h) =n,
\eee
is a well-defined section. 

The Maurer-Cartan form $g^{-1} dg $ 
is not well-defined on $Q_D$ since it is not right-invariant. However, for any $P$, the element $\tilde g$ defined in \eqref{gtilde} is, and so
\bee
\omega =  \tilde g^{-1} d\tilde g 
\eee
is well-defined and  satisfies the equivariance condition for a connection 1-form. It is the  Cartan connection which we are looking for.  The map  $U$ in \eqref{trivialisation}  satisfies \eqref{equiv}. Picking $P=P_1$ and  pulling back $\omega$ to our stereographic coordinate chart via $S_U$
 leads  to the gauge potential 
  \bee
(S_U)^*\omega  = s^*( (U r_{P_1})^{-1} d(U r_{P_1}) = (s\circ R)^{-1} d(s\circ R),
\eee
which, according to \eqref{nonabpopov} and  \eqref{Popovcartan}, indeed  captures the geometry  induced by the Popov vortices.

To end this section,  we also exhibit the second way in which one recovers geometry  from a Cartan connection. As mentioned earlier, this requires a transverse section of the $SU(2)/U(1) \simeq S^2$ bundle associated to the principal $SU(2)$ bundle $Q_D$.  In the trivialisation via \eqref{locS}, this section (often called the Higgs field in the physics literature on Cartan connections) is simply the  constant map 
\bee
\varphi: \C \rightarrow S^2, \quad  z\mapsto t_3,
\eee
where we think of $t_3$ as an element of unit sphere inside $su(2)$. 
The geometry is recovered from the covariant derivative 
\bee
D_{R^*\hat A} \varphi = [R^*\hat A, t_3],
\eee
which extracts the frame $\phi e$ from the gauge potential $R^*\hat A$. 
If we apply the gauge transformation $s^*(Ur_{P_1})$, the gauge potential vanishes in the new gauge,  but the transverse section is now 
\bee
\tilde \varphi : z\mapsto Ut_3U^{-1},
\eee
which, up to stereographic projection, is our rational map $R$. In other words, the rational map which solves the Popov vortex equations is the `transverse Higgs field' of Cartan geometry in a particular gauge.  The geometry is still recovered via the covariant derivative, but since the gauge potential vanishes now, this is simply the exterior derivative $d \tilde  \varphi$ which, modulo stereographic projection, indeed reproduces the formula for the frame $\phi e$ in terms of the derivative of $R$.

\section{Summary and Outlook}
 The  equations, spaces and maps studied in this paper are summarised in Fig.~\ref{summary}, with magnetic zero-modes on the  top left  of the picture and the Popov vortex equations on the bottom. The geometry of the 3-sphere, as encoded in the Maurer-Cartan form $h^{-1}dh$,  and its pull-back via the bundle map $U$ provides a  unifying point of view for  both, and leads to the explicit and smooth description  of both  vortex  zero-modes and of vortices.
 
  We believe that our results provide a fully geometrical understanding of magnetic zero-modes on $\R^3$, and would like to stress the practical advantage of having  manifestly singularity-free and square-integrable formulae. Previous  expressions based on the Loss-Yau formula \eqref{LossYauformula}  have singularities where the spinor field vanishes, see also our discussion  at the end of  Sect.~\ref{singsect}.

 \begin{figure}[!htbp]
 \centering
\includegraphics[width=14truecm]{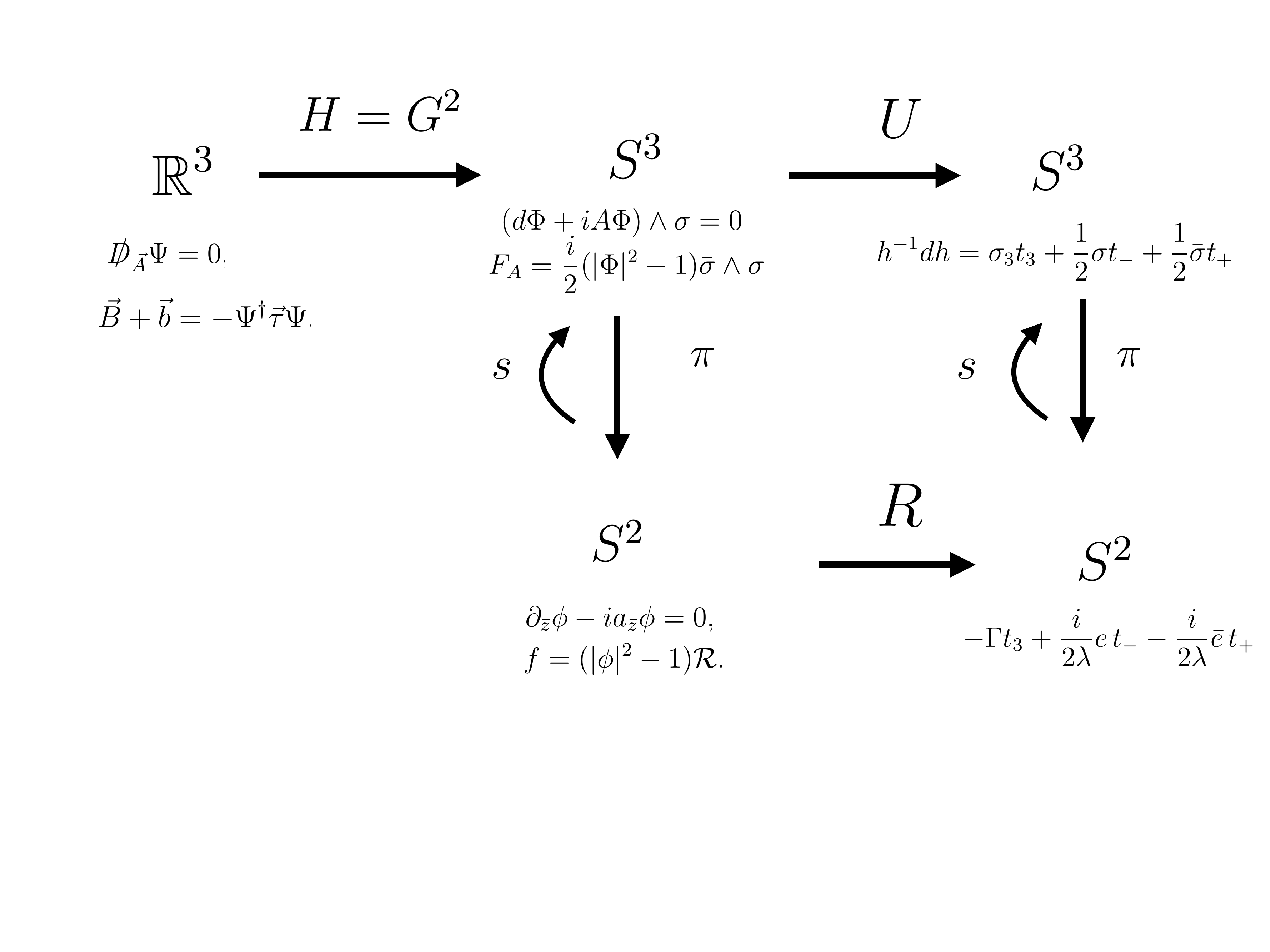}
 \vspace{-2cm}
 \caption{A summary  of the equations and maps studied in this paper}
 \label{summary}
\end{figure}

The diagram in Fig.~\ref{summary} shows that the structures studied in this paper are closely related to many of the  most studied topological solitons \cite{MS}.  Apart from  the obvious vortex configurations, the maps $U$ and their pull-backs $U\circ H$ are topologically  Skyrme fields on $S^3$ and $\R^3$. The rational maps $R$  can also be interpreted  as  `lumps' or baby Skyrmions on the 2-sphere, while the composite maps  $H\circ U\circ\pi= H\circ \pi\circ R$ are  topologically  Hopfions. As discussed in Sect.~\ref{magfields},  the magnetic fields on $\R^3$ obtained by pulling back the area form on the unit 2-sphere via these maps are  examples of  linked magnetic fields of the kind studied by Ra\~{n}ada and  more recently, in more detail and with  many pictures, in \cite{IB}. Finally, the equation obeyed by vortex zero-modes on $\R^3$ is related to the Seiberg-Witten equations on $\R^4$  with a sign flipped \cite{AMN1}.

It is clear that much of what we discussed in this paper  has a close analogue in a  Lorentzian and hyperbolic setting, where the 2-sphere is replaced by hyperbolic 2-space, the 3-sphere by 3-dimensional Anti de-Sitter space (or $SU(2)$ by $SU(1,1)$), and Euclidean $\R^3$ by Minkowski space $\R^{2,1}$. Spelling this out is the topic of a forthcoming paper \cite{RS2}. However, one should also consider further generalisations suggested by the origin of the Popov equations in self-duality.

As we briefly mentioned in the Introduction,  the Popov  equations are symmetry reductions of the self-duality equations  for $SU(1,1)$ instantons  on $\R^4$.  In fact, there is a whole family of integrable vortex equations recently studied by Manton \cite{Manton2} which have  similar links to self-duality equations, with the relevant gauge group depending on the vortex equation \cite{Dunajksi}. It is intriguing that, for Popov vortices,  the non-abelian gauge group $SU(1,1)$  needed for the self-dual  connection differs from the $SU(2)$ we used in our description in terms of flat Cartan connections. It would be interesting to understand both viewpoints  and their relationship systematically for the family of integrable vortex equations studied in \cite{Manton2}.

To end, we point out that  interactions of spinors with linked magnetic fields of the form \eqref{Ranadafield}
are currently  much discussed in atomic and  condensed matter physics, where the spinors  arise as an effective description of  nearly degenerate states of ultra-cold atomic Bose-Einstein condensates,  and the magnetic field as the curvature of a Berry connection, see, e.g., the papers \cite{JRO} and  \cite{DGJO} for a review. Our explicit expression for more general magnetic zero-modes may prove useful in that context. 

\vspace{0.5cm}

\noindent {\bf Acknowledgements} \, CR acknowledges an EPSRC-funded PhD studentship. We thank Patrik \"Ohberg for discussions about possible applications of our ideas.

\end{document}